\newenvironment{proof}{\begin{IEEEproof}}{\end{IEEEproof}}
\newcommand{\dE}{{\rm E}} 
\newtheorem{theorem}{Theorem}[section]
\newtheorem{definition}{Definition}[section]
\newtheorem{proposition}{Proposition}[section]
\newtheorem{lemma}{Lemma}[section]
\long\def\symbolfootnote[#1]#2{\begingroup%
\def\thefootnote{\fnsymbol{footnote}}\footnote[#1]{#2}\endgroup}
\def\dref#1{(\ref{#1})}
 \def\sin{\mbox{sin}\,}
\def\be{\begin{equation}} \def\ee{\end{equation}}
\def\ba{\begin{array}} \def\ea{\end{array}} \def\bna{\begin{eqnarray}}
\def\ena{\end{eqnarray}}
\def\dE{\mathbb E}
 \def\bna{\begin{eqnarray}}
\def\ena{\end{eqnarray}} \def\dref#1{(\ref{#1})}
\begin{document}
\title{Information Constrained Optimal Transport: \\
From Talagrand, to Marton, to Cover}

\author{Yikun Bai, Xiugang Wu and Ayfer \"{O}zg\"{u}r

\thanks{This paper was presented in part at the 2020 IEEE International Symposium on Information Theory \cite{isit2020}.}

\thanks{Y. Bai and X. Wu  are with the Department of Electrical and Computer Engineering,  University of Delaware, Newark, DE 19716, USA (e-mail: bai@udel.edu; xwu@udel.edu). A. \"{O}zg\"{u}r is with the Department of Electrical Engineering, Stanford University, Stanford, CA 94305, USA (e-mail: aozgur@stanford.edu).}

 
}

\maketitle

\begin{abstract}
The optimal transport problem studies how to  transport one measure to another in the most cost-effective way and has wide range of applications from economics to machine learning. In this paper, we introduce and study an information constrained variation of this problem. Our study yields a strengthening and generalization of Talagrand's celebrated transportation cost inequality. Following Marton's approach, we show that the new transportation cost inequality can be used to recover old and new concentration of measure results. Finally, we provide an application of this new inequality to network information theory. We show that it can be used to recover almost immediately a recent solution to a long-standing open problem posed by Cover regarding the capacity of the relay channel.
\end{abstract}

\section{Introduction}\label{S:Introduction}
The optimal transport (OT) theory, pioneered by Monge \cite{monge1781memoire} and Kantorovich \cite{kantorovich1942translation},  studies how to distribute supply to meet demand in the most cost-effective way. It has many known connections with, and applications to areas such as geometry, quantum mechanics, fluid dynamics, optics, mathematical statistics, and meteorology. More recently, it has received renewed interest due to its increasingly many applications in imaging sciences, computer vision and machine learning.  

\subsection{Optimal Transport  Problem}

The basic OT problem in  Kantorovich's probabilistic formulation can be described as follows. Let $\mathcal Z$ and $\mathcal Y$ be two measurable spaces,  $\mathcal P(\mathcal Z)$ and $\mathcal P(\mathcal Y )$ be the sets of all probability measures on $\mathcal Z$ and $\mathcal Y$ respectively, and $\mathcal P (\mathcal Z \times \mathcal Y)$ be the set of all joint probability measures on $\mathcal Z\times \mathcal Y$. Let $c: \mathcal Z \times \mathcal Y \to \mathbb R_+$ be a non-negative measurable function, which is called the cost function. Given two probability measures $P_Z \in \mathcal P(\mathcal Z)$ and $P_Y \in \mathcal P(\mathcal Y)$, the set of couplings of $P_Z$ and $P_Y$,  denoted by $\Pi(P_Z, P_Y)$, refers to the set of all joint probability measures $P \in \mathcal P (\mathcal Z \times \mathcal Y)$ such that their marginal measures are $P_Z$ and $P_Y$. The OT problem is to find the optimal coupling in $\Pi(P_Z, P_Y)$ that minimizes the expected cost: 
\begin{equation}\label{eq:OT}
\inf_{P\in \Pi(P_Z, P_Y)} \dE_{ P} [c(Z,Y)].
\end{equation}

A special case of particular interest is when $\mathcal Z=\mathcal Y=\mathbb R$ and $c(z,y)= | z-y |^p$, in which case the quantity 
\begin{align}
W_p (P_Z, P_Y) \triangleq \inf_{P\in \Pi(P_Z, P_Y)}\left\{ \dE_{  P} [|Z-Y|^p]\right\}^{1/p}  \label{E:Wd}
\end{align}
defines a distance metric between two probability measures $P_Z$ and $P_Y$ and is called the $p$-th order Wasserstein distance. Various transportation cost inequalities have been developed that upper bound the  Wasserstein distance between two  measures $P_Z$ and $P_Y$. For example, the celebrated Talagrand's transportation inequality \cite{talagrand_transportation_1996} states that 
\begin{align}
W^2_2 (P_{Z}, P_{Y})\leq  2D(P_{Z}\| P_{Y})   \label{E:Talagrand}
\end{align}
when $P_{Y}$ is  standard Gaussian $\mathcal N(0, 1)$ and $P_{Z}\ll P_{Y}$. 

\subsection{Information Constrained Optimal Transport}

In this paper, we propose to study a variation of the OT problem which we call the information constrained OT problem. Here, we want to find the coupling $P$ in $\Pi(P_Z, P_Y)$ that minimizes the expected cost while ensuring that the mutual information $I_P(Z;Y)$ between $Z$ and $Y$ under the coupling $P$ does not exceed some pre-specified value $R$:
\begin{align}
\inf_{P\in \Pi(P_Z, P_Y): I_P (Z;Y)\leq R} \dE_{  P} [c(Z,Y)]. \label{E:equivform}
\end{align}
There are several reasons for us to study this extension of the classical OT problem, which will become clear in the sequel. For now, note that when the infimum in \dref{eq:OT} is achieved by a deterministic mapping between $Z$ and $Y$, the mutual information $I_P (Z;Y)$ will be maximal and can be potentially unbounded. For example, this is known to be the case in \eqref{E:Wd} when $p=2$ and $P_Z$ or $P_Y$ are absolutely continuous with respect to the Lebesgue measure \cite{brenier1987decomposition}. The mutual information constraint in \eqref{E:equivform} can be viewed as enforcing a certain amount of randomization in the mapping between $Z$ and $Y$.  

It is also worth mentioning that an equivalent formulation of the information constrained OT problem has received significant recent interest in the machine learning literature, where one seeks to minimize the cost-information Lagrangian:
\begin{align}
 \inf_{P\in \Pi(P_Z, P_Y)} \left\{ \dE_{ P} [c(Z,Y)] + \lambda I_P (Z;Y) \right\}.\label{E:Lagform}
\end{align}
The problem \eqref{E:Lagform} generally appears under the name entropy regularized OT or Sinkhorn distances. 
In the  machine learning literature, the interest in \eqref{E:Lagform} has been mainly motivated by computational considerations; in many cases computing the regularized OT in \eqref{E:Lagform} from data turns out to be easier than computing the classical OT in \eqref{eq:OT}, which motivates the use of \eqref{E:Lagform} instead of \eqref{eq:OT} as a distance between probability measures \cite{cuturi2013sinkhorn}. For certain inference tasks,  \eqref{E:Lagform} also appears to be a more suitable distance  than  \eqref{eq:OT}, leading to superior empirical performance \cite{courty2016optimal}. Moreover, it is also shown  in \cite{genevay2019sample}--\cite{mena2019statistical} that  \eqref{E:Lagform} can be estimated with much fewer samples as compared to \eqref{eq:OT}.  In contrast to these works which focus on the computational and statistical aspects of \eqref{E:Lagform}, our interest in this paper mainly lies in understanding the solution of the problem \eqref{E:equivform} as well as its fundamental connections to concentration of measure and network information theory. 

\subsection{Summary of Results}
In the information constrained OT setup, one can similarly define the Wasserstein distance between two measures $P_Z$ and $P_Y$ subject to the information constraint $R$:
\begin{align}
W_p (P_Z, P_Y;R)\triangleq \inf_{ \substack{ P\in \Pi(P_Z,P_Y): \\  I_P (Z;Y)\leq R  }}\left\{ \dE_{ P}[ |Z-Y|^p]\right\}^{1/p} .  \label{E:Info-Wd}
\end{align}
Note that when $R=\infty$,  \eqref{E:Info-Wd} reduces to the unconstrained Wasserstein distance in \eqref{E:Wd}. The main result of this paper, proved in Section~\ref{S:NTI}, is an upper bound on $W_2 (P_Z, P_Y; R)$ for any $R\in \mathbb R_+$ when $P_Y$ is standard Gaussian and  $P_Z\ll P_Y$:
\begin{align}
&W^2_2 (P_Z, P_Y; R)   \leq  \dE [Z^2]+1-2\sqrt{\frac{1}{2\pi e} e^{2h(Z)}\left(1-e^{-2R} \right)}  .\label{E:NewIneq}
\end{align} 
This new transportation inequality captures the trade-off between information constraint  and transportation cost, and is tight when $P_Z$ is Gaussian. It can be regarded as a generalization and sharpening of Talagrand's inequality in \eqref{E:Talagrand}. Note that when we take $R\to\infty$  in  \eqref{E:NewIneq}, we get the following bound on the unconstrained Wasserstein distance:
\begin{align}\label{eq:newT2}
W^2_2 (P_Z, P_Y) \leq  \dE [Z^2]+1-2\sqrt{\frac{1}{2\pi e} e^{2h(Z)}   }  . 
\end{align} 
It is easy to check that the R.H.S. of \eqref{eq:newT2} is smaller than or equal to that of Talagrand's inequality in \eqref{E:Talagrand} for any $P_Z$, and therefore \eqref{eq:newT2} is uniformly tighter than \eqref{E:Talagrand}.

Since the pioneering work of Marton \cite{marton1986simple}--\cite{marton1996bounding}, it has been known that Talagrand's transportation inequality captures essentially the same geometric phenomenon as the Gaussian isoperimetric inequality,  both of which can be used to derive concentration of measure in Gaussian space. Do the new transportation inequalities in \eqref{E:NewIneq} and \eqref{eq:newT2} also have natural geometric counterparts? In Section~\ref{S:Geometry}, we show that the strengthening \eqref{eq:newT2} of Talagrand's inequality can be used to prove concentration of measure on the sphere, which can be shown to imply concentration of measure in Gaussian space. In other words, the strengthening  of Talagrand's inequality in \eqref{eq:newT2} captures a stronger isoperimetric phenomenon, the one on the sphere rather than that in Gaussian space. Furthermore, we show in Section~\ref{S:Geometry} that the information constrained transportation inequality in \eqref{E:NewIneq} captures a new isoperimetric phenomenon on the sphere that has not been known before the recent work \cite{barnes2018isoperimetric}--\cite{WuBarnesOzgur_TIT}, co-authored by a subset of the authors. Different from the standard isoperimetric inequality on the sphere where one is interested in the extremal set that minimizes the measure of its neighborhood among all sets of equal measure, this new isoperimetric result deals with the set that has minimal intersection measure with the neighborhood of a randomly chosen point on the sphere. 

Finally, in Section~\ref{S:Application} we demonstrate an application of the information constrained  transportation inequality \eqref{E:NewIneq} to network information theory.  In particular, we show  that it can be used to simplify  the recent solution of a long-standing open problem on communication over the three-node relay channel. Specifically, this problem, ``The Capacity of the Relay Channel'', was posed by Cover in the book 
\emph{Open Problems in Communication and Computation}, Springer-Verlag, 1987 \cite{cover1987capacity}. The recent works \cite{WuBarnesOzgur_TIT,wu2017geometry} solved this problem in the canonical Gaussian case by developing a new converse for the relay channel.\footnote{See also \cite{Leighton1} for the solution in the case of binary symmetric channels.}  The proof in \cite{WuBarnesOzgur_TIT, wu2017geometry} is geometric: the communication problem is recast as a problem about the geometry of typical sets in high-dimensions, and then solved using the new isoperimetric result on the sphere mentioned above. The new transportation inequality \eqref{E:NewIneq} allows us to recover the same result almost immediately, which also enables an interpretation of the previous geometric proof in terms of auxiliary random variables.



\section{New Transportation Inequalities}\label{S:NTI}

Before stating and proving our new transportation inequalities, let us first formalize the definition of the Wasserstein distance and Talagrand's transportation inequality; see also  \cite{raginsky2018concentration}.  Let $(\Omega, d)$ be a Polish metric space. Given $p\ge1$, let $\mathcal P_p(\Omega)$ denote the space of all Borel probability measures $\nu$ on $\Omega$ such that the moment bound
\begin{equation}
\dE_{\omega \sim \nu}[d^p(\omega, \omega_0)]<\infty   
\end{equation}
holds for some (and hence all) $\omega_0 \in \Omega$. 

\begin{definition}[Wasserstein Distance]
Given $p \ge 1$, the  Wasserstein distance of order $p$ between any pair $P_Z,P_Y\in \mathcal P_p(\Omega)$ is defined as
\begin{equation}
W_p (P_Z, P_Y)\triangleq \inf_{ P\in \Pi(P_Z,P_Y)}\left\{ \dE_{  P}[d^p(Z,Y)]\right\}^{1/p}  \label{E:WassDef}
\end{equation} 
where $\Pi(P_Z,P_Y)$ is the set of all probability measures on the product space $\Omega \times \Omega$ with marginals $P_Z$ and $P_Y$.  
\end{definition}
 
Indeed, the function $W_p (P_Z, P_Y)$ of $(P_Z, P_Y)$ in \dref{E:WassDef}  satisfies all the metric axioms  \cite{villani2008optimal} and defines a distance metric on the space $\mathcal P_p(\Omega)$ of distributions. 
%
If $p=2$, $\Omega=\mathbb R$ with $d(z,y)=|z-y|$, and $P_Y$ is atomless, then the optimal coupling that achieves the infimum in \eqref{E:WassDef} is given by the deterministic mapping 
\begin{equation}
    Z=F_{ Z}^{-1}\circ F_{ Y}(Y)  
\end{equation}  
where $F_{Y}$ is the cdf of  $P_Y$, i.e. $F_{Y}(y)=P_Y(Y\leq y)$ and  $F_{Z}^{-1}$ is the quantile function of $P_Z$, i.e. $F_Z^{-1}(\alpha)=\inf\{z\in\mathbb R:F_Z(z)\geq \alpha \}$. 
Building on this optimal coupling and tensorization \cite{raginsky2018concentration}, one can prove the following result for the case when $\Omega=\mathbb R^n$ and  $d(z^n,y^n)= \|z^n-y^n\| _2$, known as Talagrand's transportation inequality.

\begin{proposition}[Talagrand \cite{talagrand_transportation_1996}]\label{P:TI}
 For two probability measures $P_{Z^n}\ll P_{Y^n}$ on $\mathbb R^n$ with $P_{Y^n}$ being  standard Gaussian $\mathcal N(0, I_n)$,  we have 
\begin{equation}
W_2^2(P_{Z^n}, P_{Y^n})\leq 2D(P_{Z^n}\| P_{Y^n}), \label{E:TI}
\end{equation}
where the inequality is tight if and only if $P_{Z^n}$ is a shifted version of $P_{Y^n}$, i.e. $P_{Z^n} = \mathcal N(\mu, I_n)$ for some $\mu \in \mathbb R^n$. 
\end{proposition}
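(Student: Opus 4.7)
The plan is to reduce the $n$-dimensional statement to the one-dimensional case and prove that case directly using the explicit optimal coupling $T = F_Z^{-1}\circ F_Y$ noted in the paragraph preceding the proposition.

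For $n=1$, let $\gamma(y) = (2\pi)^{-1/2} e^{-y^2/2}$ and $f = dP_Z/dP_Y$. Since $T$ pushes $P_Y$ to $P_Z$, the change-of-variables identity $f(T(y))\,\gamma(T(y))\,T'(y) = \gamma(y)$ holds. Taking $\log$ and using the quadratic form of $\log\gamma$ gives
\begin{equation*}
\log f(T(y)) = -\log T'(y) + \tfrac{1}{2}\bigl(T(y)^2 - y^2\bigr).
\end{equation*}
Integrating against $\gamma(y)\,dy$ recovers $D(P_Z\|P_Y)$ on the left, and using the algebraic identity $\tfrac12(T(y)^2 - y^2) - \tfrac12(T(y) - y)^2 = y\bigl(T(y)-y\bigr)$, the desired bound $2D(P_Z\|P_Y) \geq \mathbb{E}_Y[(T(Y)-Y)^2]$ reduces to
\begin{equation*}
\int \bigl[-\log T'(y) + y\bigl(T(y) - y\bigr)\bigr]\gamma(y)\,dy \geq 0.
\end{equation*}
Gaussian integration by parts, via $\gamma'(y) = -y\gamma(y)$, gives $\int yT(y)\gamma(y)\,dy = \int T'(y)\gamma(y)\,dy$ (the boundary terms vanish because $W_2(P_Z,P_Y)<\infty$ forces a finite second moment) and $\int y^2\gamma(y)\,dy = 1$. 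The integrand then rewrites as $T'(y) - 1 - \log T'(y)$, which is pointwise non-negative by the elementary inequality $t - 1 \geq \log t$ for $t>0$, with equality iff $t=1$. This finishes the one-dimensional case.

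For general $n$, I would tensorize coordinate by coordinate in the spirit of Marton. Since $P_{Y^n}$ is a product, the chain rule for relative entropy gives
\begin{equation*}
D(P_{Z^n}\|P_{Y^n}) = \sum_{i=1}^n \mathbb{E}_{Z^{i-1}}\bigl[D(P_{Z_i\mid Z^{i-1}}\|P_{Y_i})\bigr].
\end{equation*}
Build a joint law of $(Z^n,Y^n)$ by sampling $Z^n\sim P_{Z^n}$ and then, for each $i$, drawing $Y_i$ independently from the conditional kernel $Q_i(\cdot\mid Z_i;Z^{i-1})$ coming from the one-dimensional optimal coupling of $P_{Z_i\mid Z^{i-1}}$ with $P_{Y_i}$. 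Integrating the $Z$-coordinates out in reverse order, and using the marginal property $\int P_{Z_i\mid z^{i-1}}(z_i)Q_i(y_i\mid z_i;z^{i-1})\,dz_i = P_{Y_i}(y_i)$, one verifies that the $Y^n$-marginal is the product Gaussian $\mathcal{N}(0,I_n)$, so the coupling is admissible. Applying the one-dimensional bound conditionally on $Z^{i-1}$ and summing yields the $n$-dimensional inequality.

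The main obstacle I expect is the tensorization bookkeeping rather than the 1D calculation: one must check that the sequentially constructed coupling really has standard Gaussian product law as its $Y^n$-marginal, which crucially uses the product structure of $P_{Y^n}$ and would fail for non-product references. Equality in the bound propagates back through the chain rule to equality in every conditional one-dimensional problem, which forces $T_i'\equiv 1$ and hence each conditional map to be a translation; this pins $P_{Z^n}$ to $\mathcal{N}(\mu,I_n)$ for some $\mu\in\mathbb{R}^n$, and the converse direction is immediate by direct computation.
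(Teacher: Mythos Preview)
Your approach matches the paper's exactly: the paper does not give a self-contained proof of this proposition but simply points to the one-dimensional optimal coupling $Z=F_Z^{-1}\circ F_Y(Y)$ and tensorization, which is precisely what you carry out. Your one-dimensional computation is correct and is essentially Talagrand's original argument.

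There is, however, a gap in your equality analysis. From $W_2^2(P_{Z^n},P_{Y^n})=2D(P_{Z^n}\|P_{Y^n})$ and the chain of inequalities
\[
W_2^2(P_{Z^n},P_{Y^n})\;\leq\;\sum_{i}\dE_{Z^{i-1}}\bigl[W_2^2(P_{Z_i|Z^{i-1}},P_{Y_i})\bigr]\;\leq\;2D(P_{Z^n}\|P_{Y^n}),
\]
equality in the second inequality indeed forces each conditional $P_{Z_i|Z^{i-1}=z^{i-1}}$ to be $\mathcal N(\mu_i(z^{i-1}),1)$ for some (a priori $z^{i-1}$-dependent) shift $\mu_i(z^{i-1})$. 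But this alone does \emph{not} pin $P_{Z^n}$ to $\mathcal N(\mu,I_n)$: for instance, $Z_1\sim\mathcal N(0,1)$ and $Z_2\mid Z_1\sim\mathcal N(Z_1,1)$ satisfies that condition yet is not of the claimed form. What rules such examples out is equality in the \emph{first} inequality, i.e.\ the fact that your sequentially built coupling must also be $W_2$-optimal; in the example just given it is not, and one computes $W_2^2<2D$ strictly. You need to invoke this second equality (or an equivalent argument, e.g.\ that the Knothe--Rosenblatt map coincides with the Brenier map only when the shifts are constant) to force each $\mu_i$ to be independent of $z^{i-1}$ and close the characterization.
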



\subsection{Sharpening Talagrand's Transportation Inequality} 

Talagrand's transportation inequality can be sharpened to the following; see also \cite{bakry2012dimension, rioul_almost_2016} for related results.

\begin{theorem} \label{T:NTI}
For $P_{Y^n} = \mathcal N(0, I_n)$  and $P_{Z^n}\ll P_{Y^n}$,  we have 
\begin{equation} 
W_2^2(P_{Z^n}, P_{Y^n})\leq \dE [\|Z^n\|^2]+n -2n \sqrt{\frac{1}{2\pi e}  e^{\frac{2}{n}h(Z^n)}  },
\label{E:NewTI}
\end{equation} 
where the inequality  is tight when $P_{Z^n}$ is isotropic Gaussian, i.e.
$P_{Z^n}=\mathcal N(\mu, \sigma^2 I_n)$ for some $\mu \in \mathbb R^n$ and $\sigma>0$.
\end{theorem}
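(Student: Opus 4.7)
The plan is to construct an explicit coupling of $P_{Z^n}$ and $P_{Y^n}$ and lower bound its inner product, since for any coupling
\begin{equation*}
\dE[\|Z^n-Y^n\|^2] = \dE[\|Z^n\|^2] + n - 2\,\dE[\langle Z^n, Y^n\rangle],
\end{equation*}
so that a lower bound on $\dE[\langle Z^n, Y^n\rangle]$ translates directly into an upper bound on $W_2^2$. Write $N(Z^n) \triangleq \frac{1}{2\pi e}e^{\frac{2}{n}h(Z^n)}$ for the per-dimension entropy power, so the target is $\dE[\langle Z^n,Y^n\rangle] \ge n\sqrt{N(Z^n)}$.

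First I would handle the one-dimensional case via the monotone coupling $Y = \Phi^{-1}(F_Z(Z))$, for which $Y \sim \mathcal N(0,1)$. Viewing $Z = F_Z^{-1}(\Phi(Y))$ as a (piecewise) smooth function of $Y$, Stein's identity $\dE[Yf(Y)] = \dE[f'(Y)]$ applied to $f(y)=F_Z^{-1}(\Phi(y))$, together with $f'(y) = \phi(y)/p_Z(Z)$, gives $\dE[ZY] = \dE[\phi(Y)/p_Z(Z)]$. Jensen's inequality for the concave logarithm then yields
\begin{equation*}
\dE\!\left[\frac{\phi(Y)}{p_Z(Z)}\right] \ge \exp\!\bigl(\dE[\log\phi(Y)] - \dE[\log p_Z(Z)]\bigr) = \exp\!\bigl(-\tfrac{1}{2}\log(2\pi e) + h(Z)\bigr) = \sqrt{N(Z)},
\end{equation*}
using $\dE[Y^2]=1$ and $-\dE[\log p_Z(Z)] = h(Z)$.

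Next I would tensorize by exploiting the chain rule. Set
\begin{equation*}
Y_i \triangleq \Phi^{-1}\bigl(F_{Z_i\mid Z^{i-1}}(Z_i\mid Z^{i-1})\bigr), \qquad i=1,\dots,n.
\end{equation*}
Since $F_{Z_i\mid Z^{i-1}}(Z_i\mid Z^{i-1})$ is uniform on $[0,1]$ conditionally on $Z^{i-1}$, each $Y_i$ is $\mathcal N(0,1)$ and independent of $Z^{i-1}$, and therefore independent of $Y^{i-1}$ (which is a function of $Z^{i-1}$); hence $Y^n\sim \mathcal N(0,I_n)$. Applying the one-dimensional bound to the conditional law $P_{Z_i\mid Z^{i-1}=z^{i-1}}$ and then taking the outer expectation over $Z^{i-1}$, using Jensen for the convex exponential to push the expectation inside, yields $\dE[Z_iY_i] \ge \exp(h(Z_i\mid Z^{i-1}))/\sqrt{2\pi e}$.

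Finally, AM--GM followed by the entropy chain rule $\sum_i h(Z_i\mid Z^{i-1}) = h(Z^n)$ gives
\begin{equation*}
\sum_{i=1}^n \dE[Z_i Y_i] \ge \frac{1}{\sqrt{2\pi e}}\sum_{i=1}^n e^{h(Z_i\mid Z^{i-1})} \ge \frac{n}{\sqrt{2\pi e}} \exp\!\left(\tfrac{1}{n} h(Z^n)\right) = n\sqrt{N(Z^n)},
\end{equation*}
which combined with the identity opening the proof delivers \dref{E:NewTI}. Tightness for $P_{Z^n} = \mathcal N(\mu,\sigma^2 I_n)$ follows by direct substitution, as both sides reduce to $\|\mu\|^2 + n(\sigma-1)^2$. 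I expect the main friction to lie in the tensorization step---specifically, verifying that the inductively defined $Y^n$ is genuinely i.i.d.\ $\mathcal N(0,1)$, and then chaining the two Jensen invocations together with AM--GM and the chain rule so that all conditional entropies collapse into the joint $h(Z^n)$ without loss.
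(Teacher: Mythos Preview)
Your proof is correct and rests on the same three ingredients as the paper's---an explicit transport coupling, Stein's identity to compute the cross term, and Jensen plus AM--GM to collapse the result to $n\sqrt{N(Z^n)}$---but the organization differs. The paper works directly in $\mathbb R^n$: it takes a map $g$ pushing $\mathcal N(0,I_n)$ forward to $P_{Z^n}$ (Brenier or Knothe--Rosenblatt), applies $n$-dimensional Stein to write $\dE[Z^n\cdot Y^n]=\dE[\operatorname{trace} J_g]$, then uses the matrix trace--determinant inequality $\frac{1}{n}\operatorname{trace}(A)\ge(\det A)^{1/n}$ and Jensen on $\dE[\ln\det J_g]=h(Z^n)-h(Y^n)$. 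Your coordinate-by-coordinate construction is exactly the Knothe--Rosenblatt map written out explicitly, and because its Jacobian is triangular, the trace and determinant factor into the one-dimensional pieces you handle separately; the matrix AM--GM then becomes your scalar AM--GM on the conditional entropies, and the change-of-variables identity becomes the chain rule. What your route buys is self-containment---no appeal to existence of Brenier/Knothe--Rosenblatt maps and no matrix inequalities---at the cost of two nested Jensen applications (first inside each conditional, then over $Z^{i-1}$) rather than a single one. The paper, by contrast, gets Theorem~\ref{T:NTI} only as the $R\to\infty$ limit of the information-constrained version, so your argument is also more direct for this particular statement.
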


Note that compared to Talagrand's transportation inequality, which is tight only when $P_{Z^n}=\mathcal N(\mu,  I_n)$, the upper bound of the Wasserstein distance in Theorem \ref{T:NTI} is tight for a wider class of $P_{Z^n}$, i.e. when $P_{Z^n}$ is isotropic Gaussian. If fact, it can been shown (c.f. Appendix \ref{A:Comp}) that this new transportation inequality is in general stronger than Talagrand's, i.e. R.H.S. of  \eqref{E:NewTI} $\leq$  R.H.S. of  \eqref{E:TI},  for any $P_{Z^n}\ll P_{Y^n}$ 
where the inequality holds with equality iff $h(Z^n)=\frac{n}{2}\ln 2\pi e$, which is the case when $P_{Z^n}=\mathcal N(\mu, I_n)$.


%

\subsection{Information Constrained OT} 
 
We next focus on bounding the information constrained OT. 

\begin{definition} [Information Constrained Wasserstein Distance]
Given $p \ge 1$, the Wasserstein distance of order $p$ between any pair $P_Z,P_Y\in \mathcal P_p(\Omega)$ subject to information constraint $R$ is defined as
\begin{equation}
W_p (P_Z, P_Y;R)\triangleq \inf_{ \substack{ P\in \Pi(P_Z,P_Y): \\  I_P (Z;Y)\leq R  }}\left\{ \dE_{ P}[d^p(Z,Y)]\right\}^{1/p} . \label{E:ICWassDef}
\end{equation}  
\end{definition}

It can be verified that the function $W_p (P_Z, P_Y;R)$  of $(P_Z, P_Y )$ in \dref{E:ICWassDef} is nonnegative, symmetric in $(P_Z, P_Y )$, and satisfies the triangle inequality (see \cite{cuturi2013sinkhorn}). However, it is not a true metric despite that we call it information constrained Wasserstein distance, because it violates the coincidence axiom, i.e., $W_p (P_Z, P_Y;R)$ in general is not equal to zero when $P_Z=P_Y$. 
%
For the case when $\Omega=\mathbb R^n$ and  $d(z^n,y^n)= \|z^n-y^n\| _2$, we can prove the following bound on it.
\begin{theorem}\label{T:NTI_Info}
 For $P_{Y^n} = \mathcal N(0, I_n)$  and $P_{Z^n}\ll P_{Y^n}$,  we have 
\begin{align}
 &W^2_2 (P_{Z^n}, P_{Y^n}; R)      
   \leq   \ \dE [\|Z^n\|^2]+n -2n  \sqrt{\frac{1}{2\pi e} e^{\frac{2}{n}h(Z^n)}\left(1-e^{-\frac{2R}{n} } \right)}  .\label{E:NTI_Info}
\end{align}
\end{theorem}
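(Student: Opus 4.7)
The plan is to reduce the information-constrained bound to Theorem \ref{T:NTI} by constructing an explicit coupling that dials down mutual information at the cost of distortion via independent Gaussian-noise mixing. Write $\sigma:=\sqrt{\frac{1}{2\pi e}e^{\frac{2}{n}h(Z^n)}}$ for convenience. First I would apply Theorem \ref{T:NTI}: for any $\varepsilon>0$, there exists a coupling $(Z^n, Y_0^n)$ with $Y_0^n\sim\mathcal N(0,I_n)$ such that
\[
\mathbb E[\|Z^n-Y_0^n\|^2]\le \mathbb E[\|Z^n\|^2]+n-2n\sigma+\varepsilon.
\]
Expanding the square and using $\mathbb E[\|Y_0^n\|^2]=n$ extracts the cross-term estimate $\mathbb E[\langle Z^n,Y_0^n\rangle]\ge n\sigma-\varepsilon/2$, which is the only feature of the Theorem \ref{T:NTI} coupling that I will need.

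Next, let $W^n\sim\mathcal N(0,I_n)$ be independent of $(Z^n,Y_0^n)$, and for a parameter $\rho\in[0,1]$ to be chosen, define
\[
Y^n := \rho\, Y_0^n + \sqrt{1-\rho^2}\, W^n.
\]
Being a linear combination of two independent standard Gaussians with coefficient squared-sum equal to $1$, $Y^n\sim\mathcal N(0,I_n)$, so the marginal requirement on $Y^n$ is preserved. Since $W^n$ is independent of $Z^n$ and has zero mean, the cost simplifies to $\mathbb E[\|Z^n-Y^n\|^2]=\mathbb E[\|Z^n\|^2]+n-2\rho\,\mathbb E[\langle Z^n,Y_0^n\rangle]$. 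For the mutual information, $h(Y^n)=\frac{n}{2}\ln(2\pi e)$, while conditional on each value of $Z^n$ the vector $Y^n$ is the sum of a random vector depending only on $Y_0^n$ and an independent $\mathcal N(0,(1-\rho^2)I_n)$. Applying the elementary fact $h(A+B)\ge h(B)$ for independent $A,B$ pointwise in $Z^n$ and integrating yields $h(Y^n\mid Z^n)\ge \frac{n}{2}\ln\bigl(2\pi e(1-\rho^2)\bigr)$, so that $I(Z^n;Y^n)\le -\frac{n}{2}\ln(1-\rho^2)$.

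Choosing $\rho=\sqrt{1-e^{-2R/n}}$ saturates the information constraint at exactly $R$, and substituting the cross-term estimate into the cost expression gives $\mathbb E[\|Z^n-Y^n\|^2]\le \mathbb E[\|Z^n\|^2]+n-2n\sqrt{1-e^{-2R/n}}\,\sigma+\rho\varepsilon$, matching the right-hand side of \eqref{E:NTI_Info} after sending $\varepsilon\to 0$. The main subtlety I expect is the conditional-entropy bound $h(Y^n\mid Z^n)\ge\frac{n}{2}\ln(2\pi e(1-\rho^2))$, because the dependence between $Z^n$ and $Y_0^n$ in the coupling from Theorem \ref{T:NTI} is essentially unrestricted; the key observation that makes it go through is that $W^n$ is independent of the entire pair $(Z^n,Y_0^n)$, which reduces the conditional bound to the unconditional statement about adding an independent Gaussian to an arbitrary random vector.
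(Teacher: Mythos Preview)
Your proof is correct and uses the same underlying construction as the paper: both split $Y^n$ as $\rho\,Y_0^n+\sqrt{1-\rho^2}\,W^n$ with $\rho=\sqrt{1-e^{-2R/n}}$, $W^n\sim\mathcal N(0,I_n)$ independent, and a standard Gaussian $Y_0^n$ coupled to $Z^n$. The difference lies in how the $(Z^n,Y_0^n)$ coupling and its cross-term bound are produced. The paper builds this coupling explicitly via a one-to-one transport map $g$ (Brenier or Knothe--Rosenblatt) pushing $\mathcal N(0,I_n)$ to $P_{Z^n}$, and then bounds $\mathbb E[Z^n\cdot Y_0^n]\ge n\sigma$ from scratch using Stein's lemma, the AM--GM (trace--determinant) inequality, and Jensen. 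Because $g$ is one-to-one, the paper obtains $I(Z^n;Y^n)=R$ exactly, and in fact proves the stronger information-\emph{density} statement (Theorem~\ref{P:NTI_InfoD}), deriving Theorem~\ref{T:NTI_Info} as a corollary. You instead invoke Theorem~\ref{T:NTI} as a black box to extract the cross-term bound, and your conditional-entropy argument only yields $I(Z^n;Y^n)\le R$; this is more modular and sidesteps Stein's lemma and the Jacobian manipulations, at the price of not delivering the information-density refinement. One logical caveat: in the paper's ordering, Theorem~\ref{T:NTI} is itself obtained as the $R\to\infty$ limit of Theorem~\ref{T:NTI_Info}, so your reduction would be circular within that framework; your argument is sound provided Theorem~\ref{T:NTI} has been established independently (which it can be, and the paper cites references to that effect).
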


The above theorem characterizes a trade-off between the Wasserstein distance and the information constraint, as depicted in Fig. \ref{F:tradeoff}. This includes Theorem \ref{T:NTI} as an extreme case by letting $R\to \infty$. The other extreme case is when $R=0$, where now $Z^n$ and $Y^n$ are forced to be independent, and therefore the information constrained Wasserstein distance simply reduces to $\dE [\|Z^n\|^2]+n$.

\begin{figure}[htb!]
\centering
\includegraphics[width=0.58 \textwidth]{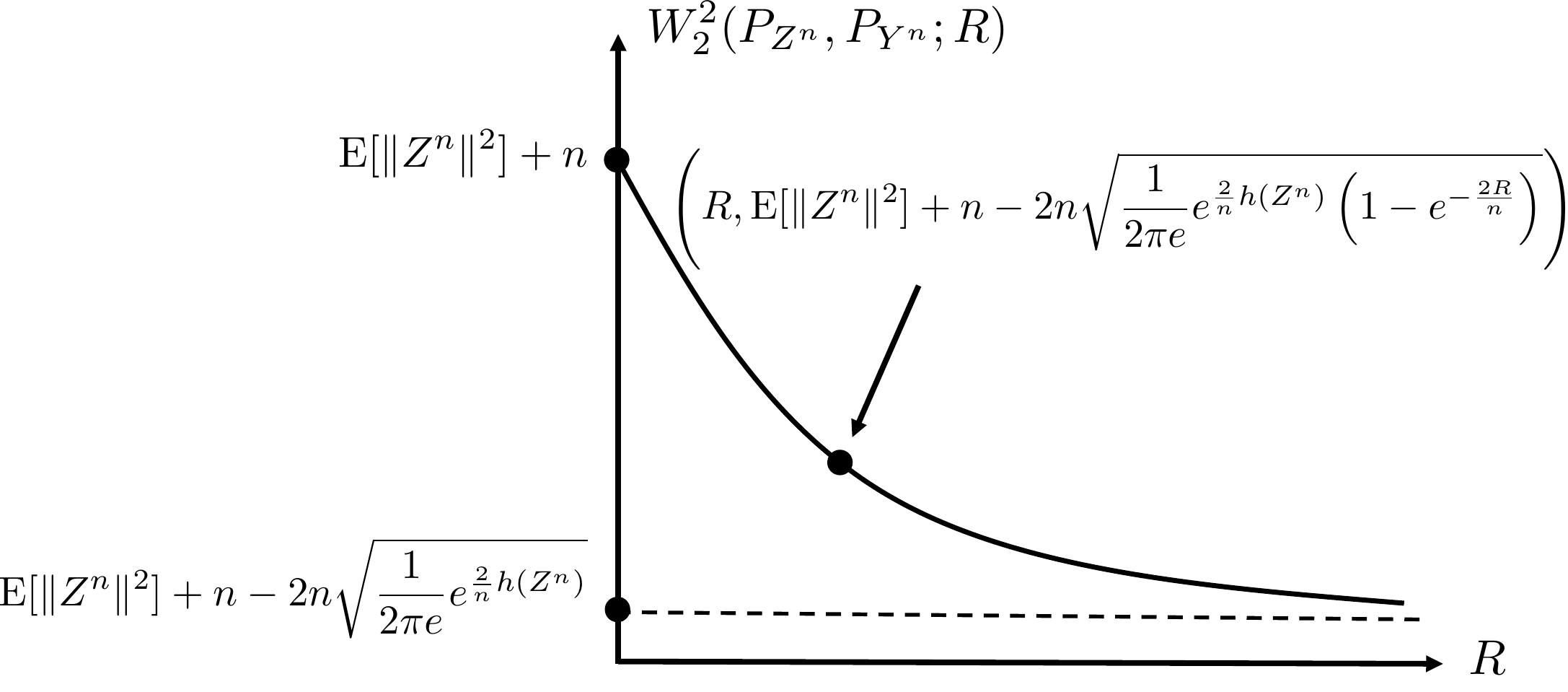}
\vspace{-2mm}
\caption{Wasserstein distance-information constraint tradeoff.}
\label{F:tradeoff}
\end{figure}

The new transportation inequality \eqref{E:NTI_Info} can be shown to be tight when  $P_{Z^n}$ is isotropic Gaussian; that is, when $P_{Z^n}=\mathcal N(\mu, \sigma^2 I_n)$ for some $\mu$ and $\sigma^2$, the inequality in \eqref{E:NTI_Info} is achieved with equality. Therefore, the trade-off characterized in Theorem \ref{T:NTI_Info} is indeed tight when $P_{Z^n}$ is isotropic Gaussian.

 
\subsection{Conditional Transportation Inequality}
 
Both Theorems \ref{T:NTI}  and \ref{T:NTI_Info} have their conditional versions. We start by defining the conditional Wasserstein distance and the conditional information constrained Wasserstein distance.
\begin{definition}[Conditional Wasserstein Distance]
Fix a probability measure $P_T$ and two conditional probability measures $P_{Z|T}$ and $P_{Y|T}$ with
$P_{Z|T=t},P_{Y|T=t}\in \mathcal P_p(\Omega)$ for any $t$. Given $p \ge 1$, the conditional Wasserstein distance of order $p$ between  $P_{Z|T}, P_{Y|T}$ given $P_T$ is defined as
\begin{equation}
W_p (P_{Z|T}, P_{Y|T}| P_T)\triangleq \inf_{ P\in \Pi(P_{Z|T},P_{Y|T}|P_T)}\left\{ \dE_{  P}[d^p(Z,Y)]\right\}^{1/p}  \label{E:CondWassDef}
\end{equation} 
where 
$$\Pi(P_{Z|T},P_{Y|T}|P_T)
\triangleq\{P_{\bar{Z},\bar{Y}|T}\cdot P_T:P_{\bar{Z}|T}=P_{Z|T},P_{\bar{Y}|T}=P_{Y|T}\}.$$ 
\end{definition}

\begin{theorem} \label{T:CNTI}
For any probability measure $P_T$ and  conditional probability measures $P_{Z^n|T}$ and $P_{Y^n|T}$ such that for any $t$, $P_{Y^n|T=t}=P_{Y^n}=\mathcal {N}(0,I_n)$ and $P_{Z^n|T=t}\ll P_{Y^n}$,   we have  
\begin{equation} 
W^2_2 (P_{Z^n|T}, P_{Y^n|T}|P_{T} )  
   \leq  \dE [\|Z^n\|^2]+n -2n  \sqrt{\frac{1}{2\pi e} e^{\frac{2}{n}h(Z^n|T)} }.
\label{E:CNewTI}
\end{equation} 
\end{theorem}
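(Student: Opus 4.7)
The plan is to reduce the conditional statement to the unconditional Theorem \ref{T:NTI} by first conditioning on $T=t$ and then integrating over $P_T$, using Jensen's inequality to handle the nonlinear $h(Z^n|T=t)$ term.

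First, I would observe that the infimum in the definition of $W_2^2(P_{Z^n|T}, P_{Y^n|T}\mid P_T)$ decouples across the conditioning variable: any coupling in $\Pi(P_{Z^n|T}, P_{Y^n|T}\mid P_T)$ has the form $P_{\bar Z^n,\bar Y^n\mid T}\cdot P_T$, so the cost $\mathbb{E}_P[\|Z^n-Y^n\|^2]$ equals $\mathbb{E}_T\bigl[\mathbb{E}_{P_{\bar Z^n,\bar Y^n\mid T=t}}[\|Z^n-Y^n\|^2\mid T=t]\bigr]$, and the minimization can be performed slicewise in $t$. This gives the identity
\begin{equation}
W_2^2(P_{Z^n|T}, P_{Y^n|T}\mid P_T)=\mathbb{E}_T\!\left[W_2^2(P_{Z^n|T=t}, P_{Y^n})\right].
\end{equation}

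Next, since $P_{Y^n|T=t}=\mathcal{N}(0,I_n)$ and $P_{Z^n|T=t}\ll P_{Y^n}$ for every $t$, I apply Theorem~\ref{T:NTI} pointwise in $t$:
\begin{equation}
W_2^2(P_{Z^n|T=t},P_{Y^n})\le \mathbb{E}[\|Z^n\|^2\mid T=t]+n-2n\sqrt{\tfrac{1}{2\pi e}e^{\frac{2}{n}h(Z^n|T=t)}}.
\end{equation}
Averaging over $P_T$, linearity takes care of the first two summands since $\mathbb{E}_T\mathbb{E}[\|Z^n\|^2\mid T=t]=\mathbb{E}[\|Z^n\|^2]$, and it remains to handle the square-root term, which enters with a negative sign.

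The main (and only nontrivial) step is therefore a Jensen inequality on the convex function $x\mapsto e^{x/n}$: writing $a(t)=h(Z^n|T=t)$ and using $h(Z^n|T)=\mathbb{E}_T[a(t)]$,
\begin{equation}
\mathbb{E}_T\!\left[e^{\frac{1}{n}a(t)}\right]\ge e^{\frac{1}{n}\mathbb{E}_T[a(t)]}=e^{\frac{1}{n}h(Z^n|T)}.
\end{equation}
Because this quantity appears with a minus sign in the bound, the direction of Jensen's inequality is exactly what we need, and multiplying by $2n\sqrt{1/(2\pi e)}$ and rearranging yields the claim. I expect no real obstacle here; the only subtle point is verifying that the slicewise optimization interchanges with the infimum (which follows from a standard measurable-selection argument applied to the family of optimal couplings guaranteed by the unconditional theorem, or more simply by the fact that each conditional problem is an independent optimization under the product structure $P_{\bar Z^n,\bar Y^n\mid T}\cdot P_T$).
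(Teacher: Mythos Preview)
Your proposal is correct and follows essentially the same route as the paper: apply the unconditional inequality (Theorem~\ref{T:NTI}) for each fixed $t$, average over $P_T$, and use Jensen's inequality on $x\mapsto e^{x/n}$ to pass from $\mathbb{E}_T[e^{\frac{1}{n}h(Z^n|T=t)}]$ to $e^{\frac{1}{n}h(Z^n|T)}$. The paper carries this out for the more general information-constrained version (Theorem~\ref{T:CNTI_Info}) and obtains Theorem~\ref{T:CNTI} as the limiting case $R\to\infty$; it also phrases the slicewise step in terms of the inner product $\mathbb{E}[\bar Z^n\cdot \bar Y^n\mid T=t]$ rather than the decoupling identity for $W_2^2$, but these are equivalent formulations of the same construction.
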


\begin{definition} [Conditional Information Constrained Wasserstein Distance]
Fix a probability measure $P_T$ and two conditional probability measures $P_{Z|T}$ and $P_{Y|T}$ with
$P_{Z|T=t},P_{Y|T=t}\in \mathcal P_p(\Omega)$ for any $t$. Given $p \ge 1$, the conditional Wasserstein distance of order $p$ between  $P_{Z|T}, P_{Y|T}$ given $P_T$ subject to information constraint $R$ is defined as
\begin{align}
W_p (P_{Z|T}, P_{Y|T}|P_T;R)&
\triangleq \inf_{ 
\substack{ P\in \Pi(P_{Z|T},P_{Y|T}|P_T),\\
I_P (Z;Y|T)\leq R  }}\left\{ \dE_{ P}[d^p(Z,Y)]\right\}^{1/p}. \label{E:WassDef_Info_Cond}  
\end{align}  
\end{definition}

 \begin{theorem}\label{T:CNTI_Info}
For any probability measure $P_T$ and  conditional probability measures $P_{Z^n|T}$ and $P_{Y^n|T}$ such that for any $t$, $P_{Y^n|T=t}=P_{Y^n}=\mathcal {N}(0,I_n)$ and $P_{Z^n|T=t}\ll P_{Y^n}$,    we have  
\begin{align}
 W^2_2 (P_{Z^n|T}, P_{Y^n|T}|P_{T}; R) &
   \leq  \dE [\|Z^n\|^2]+n -2n  \sqrt{\frac{1}{2\pi e} e^{\frac{2}{n}h(Z^n|T)}\left(1-e^{-\frac{2R}{n} } \right)}  .\label{E:NTI_Info_Cond}
\end{align}
\end{theorem}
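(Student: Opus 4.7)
The plan is to lift Theorem \ref{T:NTI_Info} from the unconditional to the conditional setting by applying it slice by slice in $t$ and then averaging over $T$. For $P_T$-almost every $t$, the hypothesis gives $P_{Y^n|T=t} = \mathcal{N}(0,I_n)$ and $P_{Z^n|T=t}\ll P_{Y^n}$, so Theorem \ref{T:NTI_Info} applies directly to $(P_{Z^n|T=t}, P_{Y^n})$ with information budget $R$, yielding
\[
W_2^2(P_{Z^n|T=t}, P_{Y^n}; R) \leq \dE[\|Z^n\|^2\mid T=t] + n - 2n\sqrt{\frac{1}{2\pi e}\, e^{\frac{2}{n}h(Z^n|T=t)}\bigl(1-e^{-2R/n}\bigr)}.
\]
For each $\varepsilon>0$, I would select for each $t$ a coupling $Q_t \in \Pi(P_{Z^n|T=t}, P_{Y^n})$ with $I_{Q_t}(Z^n;Y^n)\leq R$ whose expected squared distance is within $\varepsilon$ of this bound.

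I would then glue these slicewise couplings into a single joint law $Q(dz^n, dy^n, dt) = Q_t(dz^n, dy^n)\, P_T(dt)$. By construction $Q_{Z^n|T} = P_{Z^n|T}$ and $Q_{Y^n|T} = P_{Y^n|T}$, so $Q \in \Pi(P_{Z^n|T}, P_{Y^n|T}\mid P_T)$, while the chain rule gives $I_Q(Z^n;Y^n\mid T) = \dE_{P_T}[I_{Q_t}(Z^n;Y^n)] \leq R$. Hence $Q$ is feasible in the infimum defining $W_2^2(P_{Z^n|T}, P_{Y^n|T}\mid P_T; R)$ and yields an upper bound on it.

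The main step will be turning the slicewise cost bound into the advertised global bound after averaging over $T$. Taking $\dE_{P_T}$ of the pointwise inequality and factoring the square root as $\sqrt{(1-e^{-2R/n})/(2\pi e)}\cdot e^{\frac{1}{n}h(Z^n|T=t)}$ reduces matters to lower-bounding $\dE_{P_T}\bigl[e^{\frac{1}{n}h(Z^n|T=t)}\bigr]$. Jensen's inequality applied to the convex function $x \mapsto e^{x/n}$ gives
\[
\dE_{P_T}\bigl[e^{\frac{1}{n}h(Z^n|T=t)}\bigr] \;\geq\; e^{\frac{1}{n}\dE_{P_T}[h(Z^n|T=t)]} \;=\; e^{\frac{1}{n}h(Z^n|T)},
\]
and substituting this back, then letting $\varepsilon \downarrow 0$, produces \eqref{E:NTI_Info_Cond}. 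The only mildly delicate point is the measurable selection of $Q_t$ as a function of $t$ needed for the gluing step, which is handled by a standard measurable-selection argument; once that is in place, the proof amounts to an application of Theorem \ref{T:NTI_Info} plus a single use of Jensen's inequality.
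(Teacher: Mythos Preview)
Your proposal is correct and follows essentially the same route as the paper: apply Theorem~\ref{T:NTI_Info} slice by slice in $t$, glue the slicewise couplings into a conditional coupling feasible for the constraint $I(Z^n;Y^n\mid T)\le R$, and then use Jensen's inequality for $x\mapsto e^{x/n}$ to pass from $\dE_{P_T}\!\bigl[e^{\frac{1}{n}h(Z^n|T=t)}\bigr]$ to $e^{\frac{1}{n}h(Z^n|T)}$. The only cosmetic difference is that the paper works with the inner-product lower bound $\dE[\bar Z^n\!\cdot\!\bar Y^n\mid T=t]\ge n\sqrt{(1-e^{-2R/n})\tfrac{1}{2\pi e}e^{\frac{2}{n}h(Z^n|T=t)}}$ extracted from the explicit coupling constructed in the proof of Theorem~\ref{P:NTI_InfoD}, rather than with the squared-distance upper bound; since this coupling is given concretely (via the Brenier/Knothe--Rosenblatt map depending measurably on $P_{Z^n|T=t}$), the paper sidesteps both the $\varepsilon$-approximation and the measurable-selection issue you flag.
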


\subsection{Information Density Constrained OT}

We now introduce an OT setup with information density constraint, and present a transportation inequality for this new setup. As we will see, the information density constraint is more stringent than the information constraint, and therefore our previous transportation inequality in Theorem \ref{T:NTI_Info} can be viewed as a special case of this new inequality that we are going to present. This new inequality will be used to prove a new concentration of measure result on the sphere, which has not been known before the recent work \cite{barnes2018isoperimetric}--\cite{WuBarnesOzgur_TIT}; see Proposition \ref{P:strongisoperimetrysphere} and its proof in the next section.

Recall that for a given joint distribution $P\in \cal P(\cal Z \times \cal Y)$  with marginals $P_Z$ and $P_Y$, the information density function $i_P (z;y)$ is defined as  
$$i_P (z;y)=\ln \frac{dP }{d P_{Z} \otimes P_Y }(z,y),$$
whose expectation gives rise to the mutual information $I_P(Z;Y)$, i.e.,
$$I_P(Z;Y)=\dE_P[i_P (Z;Y)].$$
We say that a distribution $P$  satisfies $(R,\tau,\delta)$-information density constraint  for some $R\ge 0$ and $\tau, \delta>0$,  if the following two conditions hold:
\begin{enumerate}
\item the expectation of the information density, i.e. the mutual information, is upper bounded by $R$,
$$I_P(Z;Y)\leq R;$$
\item with probability at least $1-\delta$, the deviation between the information density and mutual information is upper bounded by $\tau$,
\begin{equation*}
\mathbb P_{(Z,Y)\sim P}(|i_P(Z;Y)- I_P(Z;Y)| \leq \tau)\ge 1-\delta  .
\end{equation*}
\end{enumerate}
 
\begin{definition} [Information Density Constrained Wasserstein Distance]
Given $p \ge 1$, the Wasserstein distance of order $p$ between any pair $P_Z,P_Y\in \mathcal P_p(\Omega)$ subject to $(R,\tau,\delta)$-information density constraint is defined as
\begin{equation}
W_p (P_Z, P_Y; R,\tau,\delta )\triangleq \inf_{ \substack{ P\in \Pi(P_Z,P_Y):    I_P (Z;Y)\leq R, \\ \mathbb P_{(Z,Y)\sim P}(|i_P(Z;Y)- I_P(Z;Y)| \leq \tau)\ge 1-\delta   }}\left\{ \dE_{ P}[d^p(Z,Y)]\right\}^{1/p} . \label{E:IDCWassDef}
\end{equation}  
\end{definition}

Compared to the information constrained case, the definition of the information density constrained Wasserstein distance in \dref{E:IDCWassDef} involves an additional constraint $\mathbb P_{(Z,Y)\sim P}(|i_P(Z;Y)- I_P(Z;Y)| \leq \tau)\ge 1-\delta $ in the infimization and therefore given arbitrary $P_Z, P_Y$ and $R$ we have
$$ W_p (P_Z, P_Y; R) \leq W_p (P_Z, P_Y; R,\tau,\delta )  $$
for any $\tau,\delta>0$. 
As in the information constrained case,  the quantity $W_p (P_Z, P_Y; R,\tau,\delta )$  is not a true metric because $W_p (P_Z, P_Y;R,\tau,\delta)$ in general is not equal to zero when $P_Z=P_Y$. However,  it  can be shown to also satisfy a certain triangle inequality under some conditions,  as stated in the following proposition.  The proof of this proposition is included in Appendix \ref{A:TRI}. 

\begin{proposition}\label{P:IDCTRI}
Consider three measures $\mu_1,\mu_2,\mu_3\in \mathcal{P}_p(\Omega)$ such that there exists a one-to-one mapping $g: \Omega \to \Omega$ satisfying that
\begin{enumerate}
\item $\mu_1$ is the push-forward measure of $\mu_2$ under $g$, i.e.,  $\mu_1= \mu_2 \circ g^{-1}$;
\item $g$ induces one optimal coupling that attains $W_p (\mu_1,\mu_2)$, i.e., 
$$W_p (\mu_1,\mu_2)=\left\{ \dE_{ Y\sim \mu_2 }[d^p(g(Y),Y)]\right\}^{1/p}.$$
\end{enumerate}
 Then the following triangle inequality holds: 
\begin{align}
W_p(\mu_1,\mu_3;  R,\tau,\delta )&\leq W_p(\mu_1,\mu_2)+ W_p(\mu_2,\mu_3; R,\tau,\delta ). \label{E:setdelta1} 
\end{align}
\end{proposition}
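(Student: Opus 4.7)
The plan is to glue the deterministic map $g$ with a near-optimal coupling of $(\mu_2, \mu_3)$ for the information density constrained problem, producing a feasible coupling of $(\mu_1, \mu_3)$ whose cost is bounded by $W_p(\mu_1, \mu_2) + W_p(\mu_2, \mu_3; R, \tau, \delta)$ via Minkowski's inequality.

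First, for any $\epsilon > 0$, pick a coupling $P^*_{\tilde Z \tilde Y} \in \Pi(\mu_2, \mu_3)$ with $I_{P^*}(\tilde Z; \tilde Y) \leq R$, with $\mathbb P_{P^*}(|i_{P^*}(\tilde Z;\tilde Y) - I_{P^*}(\tilde Z;\tilde Y)| \leq \tau) \geq 1 - \delta$, and with cost within $\epsilon$ of $W_p(\mu_2, \mu_3; R, \tau, \delta)$. Define $Z := g(\tilde Z)$ and $Y := \tilde Y$, and let $\nu$ denote the joint law of $(Z, Y)$, i.e.\ the pushforward of $P^*$ under $(\tilde z, \tilde y)\mapsto (g(\tilde z),\tilde y)$. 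By hypothesis (1), $Z \sim \mu_1$, so $\nu \in \Pi(\mu_1, \mu_3)$.

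The next step, which I view as the key verification, is that $\nu$ also satisfies the $(R, \tau, \delta)$-information density constraint. Because $g$ is a measurable bijection with measurable inverse, a standard change-of-variables argument shows $i_\nu(Z, Y) = i_{P^*}(\tilde Z, \tilde Y)$ almost surely: the Jacobian factors from the change of variable appear both in the joint density and in the marginal density of $Z = g(\tilde Z)$, and cancel in the ratio defining $i_\nu$. Integrating gives $I_\nu(Z;Y) = I_{P^*}(\tilde Z;\tilde Y) \leq R$, and the event $\{|i_\nu(Z;Y) - I_\nu(Z;Y)| \leq \tau\}$ agrees almost surely with the corresponding event for $P^*$, so its probability is at least $1 - \delta$. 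Hence $\nu$ is feasible for the infimum defining $W_p(\mu_1,\mu_3;R,\tau,\delta)$.

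Finally, the Minkowski inequality in $L^p(P^*)$ applied to $d(g(\tilde Z), \tilde Y) \leq d(g(\tilde Z), \tilde Z) + d(\tilde Z, \tilde Y)$ gives
\begin{equation*}
\{\dE_\nu[d^p(Z, Y)]\}^{1/p} \leq \{\dE_{P^*}[d^p(g(\tilde Z), \tilde Z)]\}^{1/p} + \{\dE_{P^*}[d^p(\tilde Z, \tilde Y)]\}^{1/p}.
\end{equation*}
By hypothesis (2) the first term on the right equals $W_p(\mu_1, \mu_2)$, while the second is at most $W_p(\mu_2, \mu_3; R, \tau, \delta) + \epsilon$. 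Letting $\epsilon \downarrow 0$ yields \eqref{E:setdelta1}.

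The main obstacle I expect is a fully rigorous justification of the pointwise identity $i_\nu(Z,Y) = i_{P^*}(\tilde Z,\tilde Y)$; intuitively it is clear because bijections neither create nor destroy information, but a clean proof requires writing Radon--Nikodym derivatives with respect to a common reference measure and tracking the cancellation of the Jacobian of $g^{-1}$ between the numerator and the $Z$-marginal in the denominator. Once that identity is in hand, the remainder of the argument is just Minkowski plus the hypothesis that $g$ realizes an optimal transport plan for $W_p(\mu_1,\mu_2)$.
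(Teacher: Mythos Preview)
Your proposal is correct and follows essentially the same approach as the paper: glue the deterministic optimal map $g$ with a (near-)optimal information-density-constrained coupling of $(\mu_2,\mu_3)$, use that a bijection preserves the information density so the resulting $(\mu_1,\mu_3)$-coupling is feasible, and conclude via Minkowski. The only minor difference is that you work with an $\epsilon$-near-optimal coupling and pass to the limit, whereas the paper tacitly assumes the infimum defining $W_p(\mu_2,\mu_3;R,\tau,\delta)$ is attained; your version is arguably the more careful one.
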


\vspace{2mm}
For this OT setup with information density constraint, we have the following bound when  $\Omega=\mathbb R^n$ and  $d(z^n,y^n)= \|z^n-y^n\| _2$.

\begin{theorem}\label{P:NTI_InfoD}
For $P_{Y^n} = \mathcal N(0, I_n)$  and $P_{Z^n}\ll P_{Y^n}$, we have that for any $R, \tau \ge 0$
\begin{align}
&W_2^2(P_{Z^n},P_{Y^n};  R, \tau,6n/\tau^2 ) 
\leq      \dE [\|Z^n\|^2]+n -2n  \sqrt{\frac{1}{2\pi e} e^{\frac{2}{n}h(Z^n)}\left(1-e^{-\frac{2R}{n} } \right)} \label{E:NTI_InfoD} 
\end{align}
\end{theorem}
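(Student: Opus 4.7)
The plan is to exhibit a single coupling that simultaneously attains the cost bound of Theorem~\ref{T:NTI_Info} and satisfies the information-density concentration, thus providing a feasible $P$ in the infimum defining $W_2(P_{Z^n}, P_{Y^n}; R, \tau, 6n/\tau^2)$.

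First I would reuse the coupling underlying Theorem~\ref{T:NTI_Info}. Since $P_{Z^n}\ll P_{Y^n}=\mathcal N(0,I_n)$, Brenier's theorem yields a deterministic map $T\colon\R^n\to\R^n$ with $\tilde Y^n := T(Z^n)\sim\mathcal N(0,I_n)$ and $\dE[\|Z^n-\tilde Y^n\|^2]=W_2^2(P_{Z^n},P_{Y^n})$. I would then draw $U\sim\mathcal N(0,I_n)$ independent of $Z^n$, set $\rho := \sqrt{1-e^{-2R/n}}$, and define
\[
Y^n := \rho\, \tilde Y^n + \sqrt{1-\rho^2}\, U.
\]
Because $\tilde Y^n$ and $U$ are independent Gaussians, $Y^n\sim\mathcal N(0,I_n)$; and since $Y^n\mid Z^n=z$ is $\mathcal N(\rho T(z),(1-\rho^2)I_n)$, we obtain $h(Y^n\mid Z^n)=\tfrac{n}{2}\log(2\pi e(1-\rho^2))$ and hence $I(Z^n;Y^n)=R$. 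Expanding $\dE[\|Z^n-Y^n\|^2]$, using $\dE[Z^n\cdot U]=0$, and applying Theorem~\ref{T:NTI} to lower-bound $\dE[Z^n\cdot\tilde Y^n]$ delivers the cost bound in \eqref{E:NTI_Info}.

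Next I would establish the information-density concentration for this coupling. Taking the log of the ratio of the Gaussian conditional density of $Y^n\mid Z^n=z$ to the Gaussian marginal of $Y^n$ and simplifying yields the decomposition
\[
i_P(Z^n;Y^n) - I(Z^n;Y^n) = \tfrac{\rho^2}{2}\bigl(\|\tilde Y^n\|^2-n\bigr) + \rho\sqrt{1-\rho^2}\, \tilde Y^n\cdot U - \tfrac{\rho^2}{2}\bigl(\|U\|^2-n\bigr).
\]
Because $\tilde Y^n\sim\mathcal N(0,I_n)$ is independent of $U\sim\mathcal N(0,I_n)$, each of the three summands has mean zero and, using vanishing odd Gaussian moments, all pairwise covariances vanish. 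Combining $\Var(\|\tilde Y^n\|^2)=\Var(\|U\|^2)=2n$ and $\dE[(\tilde Y^n\cdot U)^2]=n$ (or alternatively the crude $(a+b+c)^2\le 3(a^2+b^2+c^2)$ on each term) bounds the variance by $6n$. Chebyshev's inequality then gives $\Prob(|i_P(Z^n;Y^n)-I(Z^n;Y^n)|>\tau)\le 6n/\tau^2$, which is precisely the additional constraint in \eqref{E:IDCWassDef}.

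The main obstacle is ensuring the clean quadratic-plus-bilinear decomposition of $i_P$: without $\tilde Y^n$ being a deterministic function of $Z^n$, the conditional law of $Y^n$ given $Z^n$ would be a Gaussian mixture rather than a single Gaussian, and $i_P$ would lose its explicit form above. Brenier's theorem supplies precisely this determinism since $P_{Z^n}\ll P_{Y^n}$ forces $P_{Z^n}$ to be absolutely continuous with respect to Lebesgue measure. A secondary bookkeeping step is checking that the three cross-covariances among $\|\tilde Y^n\|^2-n$, $\tilde Y^n\cdot U$, and $\|U\|^2-n$ all vanish, which follows from the independence of $U$ from $(Z^n,\tilde Y^n)$ together with the fact that standard Gaussians have zero first and third moments.
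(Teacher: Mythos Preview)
Your construction is the paper's: the same coupling $Y^n=\rho\,\tilde Y^n+\sqrt{1-\rho^2}\,U$ with $\tilde Y^n$ a deterministic push-forward of $Z^n$ to $\mathcal N(0,I_n)$, the same computation $I_P(Z^n;Y^n)=R$, and the same information-density identity $i_P-R=\tfrac12(\|Y^n\|^2-\|U\|^2)$, which you expand one step further and bound via a variance--Chebyshev argument (the paper instead applies a union bound on $\{|\|Y^n\|^2-n|\le\tau\}$ and $\{|\|U\|^2-n|\le\tau\}$ separately, which is where the constant $6$ comes from).

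The one substantive issue is your appeal to Theorem~\ref{T:NTI} to lower-bound $\dE[Z^n\cdot\tilde Y^n]$. In the paper's logical order, Theorems~\ref{T:NTI} and~\ref{T:NTI_Info} are \emph{derived from} Theorem~\ref{P:NTI_InfoD} (by dropping the density constraint and letting $R\to\infty$), so invoking them here is circular. The paper avoids this by establishing the inner-product bound directly: writing $Z^n=g(\tilde Y^n)$ with $J_g$ having nonnegative eigenvalues, Stein's lemma gives $\dE[Z^n\cdot\tilde Y^n]=\dE[\mathrm{trace}\,J_g(\tilde Y^n)]$, then $\tfrac1n\mathrm{trace}\ge(\det)^{1/n}$ and Jensen on $\dE[\ln\det J_g]=h(Z^n)-h(\tilde Y^n)$ yield $\dE[Z^n\cdot\tilde Y^n]\ge n\sqrt{\tfrac{1}{2\pi e}e^{2h(Z^n)/n}}$. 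If you insert this computation (or cite an independent proof of Theorem~\ref{T:NTI}, e.g.\ \cite{bakry2012dimension,rioul_almost_2016}), your argument is complete and matches the paper's.
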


It is easy to see that the above theorem includes Theorem \ref{T:NTI_Info} as a special case by  noting that 
$$W_2^2(P_{Z^n},P_{Y^n};  R ) \leq W_2^2(P_{Z^n},P_{Y^n};  R, \tau ,6n/\tau^2 )$$
for any $R, \tau \ge 0$.

\subsection{Proofs of New Transportation Inequalities}
In this subsection, we provide the proofs of Theorems \ref{T:NTI}--\ref{P:NTI_InfoD}. Recall that Theorems  \ref{T:NTI},  \ref{T:NTI_Info} and  \ref{P:NTI_InfoD} are unconditional  transportation inequalities, while Theorems  \ref{T:CNTI} and \ref{T:CNTI_Info} are the conditional versions. In particular, Theorem  \ref{T:NTI} follows from Theorem \ref{T:NTI_Info}, which  in turn follows from Theorem \ref{P:NTI_InfoD} as a special case. Thus,  in the following we first focus on proving Theorem \ref{P:NTI_InfoD} to establish all the unconditional transportation  inequalities stated in the paper. Then we show how to obtain the conditional versions,  in particular Theorems  \ref{T:CNTI} and \ref{T:CNTI_Info}; for this, it suffices to show how to extend Theorem \ref{T:NTI_Info} 
 to Theorem \ref{T:CNTI_Info}.

\begin{proof}[Proof of Theorem \ref{P:NTI_InfoD}]
To show Theorem \ref{P:NTI_InfoD}, it suffices to construct a coupling $P$ of $P_{Z^n}$ and $P_{Y^n}$ such that the 
$(R, \tau,6n/\tau^2)$-information density constraint is satisfied, i.e., 
 $$ I_P(Z^n;Y^n)   \leq R $$
and
$$\mathbb P_{(Z^n,Y^n)\sim P} (|i_{P}(Z^n;Y^n)- I_P(Z^n;Y^n)|\leq \tau)  \ge 1-6n/\tau^2 ,$$
and simultaneously $\dE_P [\|Z^n-Y^n\|^2]$ is upper bounded by the R.H.S. of \eqref{E:NTI_InfoD}.  For this, let $$Y^n=\sqrt{1-e^{- \frac{2R}{n}}}Y^n_1+e^{-\frac{ R}{n}}Y^n_2,$$ where $Y^n_1,Y^n_2 \sim \mathcal N(0,I_n)$ are independent of each other, and let $Z^n$ satisfy  
$$Z^n=g(Y^n_1)$$ 
for some $g : \mathbb{R}^n \to \mathbb{R}^n$ that pushes $P_{Y_1^n}=\mathcal N(0,I_n)$ forward to $P_{Z^n}$, where $g$ is a differentiable one-to-one mapping whose Jacobian matrix $J_g$ has only nonnegative eigenvalues.
 Note that such a mapping $g$ always exists provided that $P_{Y_1^n}=\mathcal N(0,I_n)$ is absolute continuous with respect to the Lebesgue measure and $P_{Z^n}\ll P_{Y_1^n}$,  and examples include the Brenier mapping \cite{brenier1987decomposition} and the Knothe-Rosenblatt mapping \cite{villani2008optimal}.  See also Lemma 1 of  \cite{rioul_almost_2016}.

It is easy to verify that the joint distribution $P$ of $(Z^n,Y^n)$ defined by the above is   indeed a coupling of $P_{Z^n}$ and $P_{Y^n}$. To see that this coupling satisfies the information density constraint, first note that
\begin{align}
 I_P(Z^n;Y^n)&= h(Y^n)-h(\sqrt{1-e^{- \frac{2R}{n}}}Y^n_1+e^{-\frac{ R}{n}}Y^n_2 |Z^n) \nonumber \\
 &= h(Y^n)-h( e^{-\frac{ R}{n}}Y^n_2|Z^n) \label{e:1-1mapping}  \\
  &= h(Y^n)-h( e^{-\frac{ R}{n}}Y^n_2 ) \label{e:y_2indepz}  \\
& = h(Y^n)-n \ln(e^{-\frac{ R}{n}} ) - h(Y^n_2)  \nonumber  \\
& =R  \nonumber 
\end{align}
where \eqref{e:1-1mapping} holds because $g$ is a one-to-one mapping and thus $Y^n_1$ is determined given $Z^n$, and \eqref{e:y_2indepz} follows from the independence between $Y^n_2$ and $Z^n$. Also we have
\begin{align}
 & \mathbb P (|i_{P}(Z^n;Y^n)- I_P(Z^n;Y^n)|\leq \tau)\nonumber\\
= \ &\mathbb P  \left(\left|\ln\left(\frac{f_{Y^n|Z^n}(Y^n|Z^n)}{f_{Y^n}(Y^n)}\right)-R\right|\leq \tau\right)\nonumber\\
=\ &\mathbb P  \left(\left|\ln\left(\frac{f_{e^{-R/n}Y^n_2}(e^{-R/n}Y^n_2)}{f_{Y^n}(Y^n)}\right)-R\right|\leq \tau\right)\nonumber\\
=\ &\mathbb P  (|\|Y ^n\|^2-\|Y_2^n\|^2|\leq 2\tau) \nonumber \\
\ge\ & \mathbb P( |\|Y ^n\|^2-n| \leq \tau,|\|Y_2^n\|^2-n| \leq \tau)   \nonumber \\
\ge \ & 1-(\mathbb P(|\|Y ^n\|^2-n|\ge \tau)+\mathbb P(|\|Y_2^n\|^2-n|\ge \tau)) \nonumber\\
\ge \ &1-\frac{6n}{\tau^2} \label{Pf:TNI_InfoD_Con2_Ex1}
\end{align}
where \eqref{Pf:TNI_InfoD_Con2_Ex1} holds by Chebyshev's inequality.

Now it remains to show $\dE_P[\|Z^n-Y^n \|^2]\leq$ R.H.S of \eqref{E:NTI_InfoD}. For this, we will lower bound $\dE_P[Z^n\cdot Y^n ]$ in the sequel.  In particular,   letting $g_i$ denote the $i${th} coordinate of $g$, we have 
\begin{align}
  \dE_P [Z^n\cdot Y^n_1 ]
= \ & \sum_{i=1}^n \dE_P \left[\frac{\partial g_i}{\partial y_{1i}}(Y^n_1)\right]\label{Pf:TNI_InfD_Tobound1_Ex1}\\
=\ &\dE_P[\text{trace}(J_g(Y_1^n))]\nonumber\\
\ge\ & \dE_P[n(\det(J_g(Y_1^n)))^{1/n}] \label{Pf:TNI_InfD_Tobound1_Ex2}\\
= \ & n\dE_P[e^{\ln(\det(J_g(Y_1^n)))^{1/n}}] \label{Pf:TNI_InfD_Tobound1_Ex_pos}\\
\ge \ &ne^{\frac{1}{n}\dE_P \ln[\text{det}(J_g(Y^n_1))]}\label{Pf:TNI_InfD_Tobound1_Ex3}\\
=\ & ne^{\frac{1}{n}(h(Z^n)-h(Y^n_1))}\label{Pf:TNI_InfD_Tobound1_Ex4}\\
=\ & n\sqrt{\frac{1}{2\pi e}e^{\frac{2}{n}h(Z^n)}} \nonumber 
\end{align}
where \eqref{Pf:TNI_InfD_Tobound1_Ex1} follows from Stein's lemma for  $P_{Y^n_1}=\mathcal N(0,I_n)$, which says that 
if $Y_1^n\sim \mathcal N(0,I_n)$ and $f:\mathbb R^n\to \mathbb R$ is differentiable, then  
$\dE[f(Y^n_1)Y_{1i}]=\dE[\frac{\partial }{\partial y_i}f(Y^n_1)]$; \eqref{Pf:TNI_InfD_Tobound1_Ex2} holds by the fact that for any  matrix $A$ whose eigenvalues are all nonnegative,  
$\frac{1}{n}\text{trace}(A)\geq (\det(A))^{1/n}$; \dref{Pf:TNI_InfD_Tobound1_Ex_pos} follows from the nonnegativity of $\det(J_g(Y_1^n))$; \eqref{Pf:TNI_InfD_Tobound1_Ex3} is due to Jensen's inequality; and \eqref{Pf:TNI_InfD_Tobound1_Ex4} holds because $Z^n=g(Y_1^n)$ and therefore $$f_{Z^n}(g(y^n_1))\det(J_g(y^n_1) )=f_{Y_1^n}(y^n_1), \forall y_1^n.$$ 
Therefore, $\dE_P[Z^n\cdot Y^n]$ is lower bounded by 
\begin{align}
 \dE_P[Z^n\cdot Y^n] & = \sqrt{1-e^{-\frac{2R}{n}}}\dE_P[Z^n\cdot Y_1^n]\nonumber \\
& \ge  n\sqrt{1-e^{-\frac{2R}{n}}}\sqrt{\frac{1}{2\pi e}e^{\frac{2}{n}h(Z^n)}},\nonumber 
\end{align}
and hence 
\begin{align*}
\dE_P[\|Z^n-Y^n\|^2] = \ & \dE [\|Z^n\|^2]+n -2  \dE_P[Z^n\cdot Y^n] \\
\leq \ &  \dE [\|Z^n\|^2]+n -2n  \sqrt{\frac{1}{2\pi e} e^{\frac{2}{n}h(Z^n)}\left(1-e^{-\frac{2R}{n} } \right)}.
\end{align*}
This completes the proof of Theorem \ref{P:NTI_InfoD}.
\end{proof}

\vspace{2mm}

We now show how to obtain Theorem \ref{T:CNTI_Info} based on Theorem \ref{T:NTI_Info}. 

\begin{proof}[Proof of Theorem \ref{T:CNTI_Info}]
By Theorem \ref{T:NTI_Info},  there exists some $P_{\bar Z^n, \bar Y^n|T }$ such that for any $t$ 
\begin{align}
P_{\bar Y^n|T=t}=\mathcal N(0,I_n), & \ P_{\bar Z^n|T=t}=P_{ Z^n|T=t}, \label{Pf:NTI_InfoCond_Cond1}\\
 \ I(\bar Z^n;\bar Y^n|T=t)& \leq R , \label{Pf:NTI_InfoCond_Cond2}\\
\text{and }  \ \dE[\bar Z^n\cdot \bar Y^n|T=t]& \ge n\sqrt{1-e^{-\frac{2R}{n}}}\sqrt{\frac{1}{2\pi e} e^{\frac{2}{n}h(Z^n|T=t)}}.\label{Pf:NTI_InfoCond_Cond3}
\end{align}
Since \dref{Pf:NTI_InfoCond_Cond1} holds for any $t$, we have $P_{\bar Y^n|T}=\mathcal N(0,I_n)$ and $P_{\bar Z^n|T }=P_{ Z^n|T }$, and hence 
\begin{align}
P_{\bar Z^n, \bar Y^n|T } \in \Pi (P_{  Z^n|T }, P_{  Z^n|T }|P_T).\label{e:conclude_1}
\end{align}
 From \eqref{Pf:NTI_InfoCond_Cond2}, we get 
\begin{align}
I(\bar Z^n; \bar Y^n|T) &= \dE_{S\sim P_T } [I(\bar Z^n; \bar Y^n|T=S) ] \leq \dE_{S\sim P_T } [R ] = R. \label{e:conclude_2}
\end{align} 
Moreover, using \eqref{Pf:NTI_InfoCond_Cond3}  we can lower bound $\dE[\bar Z^n\cdot \bar Y^n]$ by
 \begin{align}
 \dE[\bar Z^n\cdot \bar Y^n]& =\dE_{S\sim P_T}[\dE[\bar Z^n\cdot \bar Y^n|T=S]] \nonumber\\
 & \ge \dE_{S\sim P_T}\left[n \sqrt{1-e^{\frac{2R}{n}}}\sqrt{\frac{1}{2\pi e}e^{\frac{2}{n}h(Z^n|T=S)}}\right] \nonumber\\
& \ge  n\sqrt{1-e^{\frac{2R}{n}}}\sqrt{\frac{1}{2\pi e}e^{\frac{2}{n}\dE_{S\sim P_T}[h(Z^n|T=S)]}} \label{Pf:NTI_InfoCond_bound_ex1}\\
& = n\sqrt{1-e^{\frac{2R}{n}}}\sqrt{\frac{1}{2\pi e}e^{\frac{2}{n}h(Z^n|T)}},\nonumber
 \end{align}
 where \eqref{Pf:NTI_InfoCond_bound_ex1} follows from Jensen's inequality, and therefore $\dE[\|\bar Z^n-\bar Y^n\|^2]$ can be upper bounded by
 \begin{align}
\dE[\|\bar Z^n-\bar Y^n\|^2] = \dE[\|Z^n\|^2]+n -2 n\sqrt{1-e^{\frac{2R}{n}}}\sqrt{\frac{1}{2\pi e}e^{\frac{2}{n}h(Z^n|T)}}   .    \label{e:conclude_3}
 \end{align}
Combining \dref{e:conclude_1}, \dref{e:conclude_2} and \dref{e:conclude_3} completes the proof of Theorem \ref{T:CNTI_Info}. 
\end{proof}

\section{Geometry: Concentration and Isoperimetry}\label{S:Geometry}

Transportation cost inequalities of the form \eqref{E:Talagrand} are known to imply concentration of measure, an inherently geometric phenomenon tightly coupled with isoperimetric inequalites. This section discusses the geometric implications of Theorems \ref{T:NTI}--\ref{T:NTI_Info}. For this, we begin with  the geometry of Talagrand's transportation inequality. 


\subsection{Concentration and Isoperimetry in Gaussian Space}
Consider a Gaussian space $(\mathbb R^n, \gamma)$, where $\gamma=\mathcal N(0,I_n)$ is the standard Gaussian measure on $\mathbb R^n$. For any $A\subseteq  \mathbb R^n$ and $t>0$, let $A_t$ denote the $t$-blowup set of $A$:
$$A_t=\{ x^n\in \mathbb R^n: \|x^n-a^n\|\leq t \text{ for some } a^n \in A  \}.$$
The following concentration of measure result is generally known as the blowing-up lemma in Gaussian space \cite{raginsky2018concentration}.  
\begin{proposition}\label{prop:Gaussblowup}
For any $A\subseteq \mathbb R^n$ with $\gamma_n(A)\geq e^{-na}$, 
\begin{align*}
\gamma_n (A_t) \to 1    \text{ as } n \to \infty 
\end{align*}
when $t \geq \sqrt{2n(a+\epsilon)}$ for some $\epsilon>0$.
\end{proposition}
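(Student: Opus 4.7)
The plan is to follow Marton's classical argument deriving blowing-up concentration from a quadratic transportation cost inequality. Proposition~\ref{P:TI} is essentially all that is needed here (Theorem~\ref{T:NTI} would give the same exponent, since the exponential scale is already dictated by the relative entropy terms). Fix $A\subseteq \mathbb R^n$ with $\gamma_n(A)\geq e^{-na}$, and set $B=A_t^c$; we want to show $\gamma_n(B)\to 0$.

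First I would reduce the problem to comparing two ``conditioned'' measures. Define the normalized restrictions
\begin{equation*}
\mu_A(\cdot)=\frac{\gamma_n(\cdot\cap A)}{\gamma_n(A)},\qquad \mu_B(\cdot)=\frac{\gamma_n(\cdot\cap B)}{\gamma_n(B)}
\end{equation*}
(if $\gamma_n(B)=0$ the claim is trivial). By construction of $A_t$, any $z\in A$ and $y\in B$ satisfy $\|z-y\|_2>t$, so every coupling of $\mu_A$ and $\mu_B$ is supported on pairs at distance greater than $t$, giving the key geometric bound $W_2(\mu_A,\mu_B)\geq t$.

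Next I would combine the triangle inequality for $W_2$ with Talagrand's inequality applied separately to $\mu_A$ and $\mu_B$:
\begin{equation*}
t\ \leq\ W_2(\mu_A,\gamma_n)+W_2(\gamma_n,\mu_B)\ \leq\ \sqrt{2D(\mu_A\|\gamma_n)}+\sqrt{2D(\mu_B\|\gamma_n)}.
\end{equation*}
The two divergences collapse to very simple quantities because of the Radon-Nikodym derivative $d\mu_A/d\gamma_n=\gamma_n(A)^{-1}\mathbf{1}_A$, yielding $D(\mu_A\|\gamma_n)=\ln(1/\gamma_n(A))\leq na$ and similarly $D(\mu_B\|\gamma_n)=\ln(1/\gamma_n(B))$. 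Plugging $t\geq \sqrt{2n(a+\epsilon)}$ into the chain and solving for $\gamma_n(B)$ gives
\begin{equation*}
\gamma_n(B)\ \leq\ \exp\!\left(-n\bigl(\sqrt{a+\epsilon}-\sqrt{a}\bigr)^2\right),
\end{equation*}
which tends to $0$ exponentially in $n$ since $(\sqrt{a+\epsilon}-\sqrt{a})^2>0$. Hence $\gamma_n(A_t)=1-\gamma_n(B)\to 1$, as claimed.

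No substantive obstacle arises; the argument is entirely routine once one recognizes the three ingredients (separation of $A$ and $B$ by $t$ in Euclidean distance, the $W_2$ triangle inequality, and Talagrand's bound). The only delicate point to handle carefully is the strict versus non-strict separation $\|z-y\|_2>t$ on the boundary of $A_t$, but since this involves at worst a set of points at exact distance $t$, it does not affect the $W_2$ lower bound; the boundary case $\gamma_n(B)=0$ is trivial.
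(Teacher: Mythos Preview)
Your proposal is correct and follows essentially the same route as the paper: Marton's argument via the triangle inequality for $W_2$, Talagrand's inequality applied to the conditioned measures $\gamma_A$ and $\gamma_{A_t^c}$, the identity $D(\gamma_A\|\gamma)=\ln(1/\gamma(A))$, and the separation bound $W_2(\gamma_A,\gamma_{A_t^c})\geq t$. The paper presents exactly this chain and then takes $B=A_t^c$, just as you do.
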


Roughly, the above result states that under the product Gaussian measure, slightly blowing up any set with a small but exponentially significant probability suffices to increase its probability to nearly 1; hence the name blowing-up lemma. This  lemma can be thought of as a consequence of the isoperimetric inequality in Gaussian space, which says that among all sets with equal Gaussian measure, a halfspace minimizes the measure of its $t$-blowup. Therefore, if we start with two sets $A$ and $H$, where $\gamma(A)=\gamma(H)$ and $H$ is a halfspace, then $\gamma(A_t)\geq \gamma(H_t)$ and hence it suffices to check that $\gamma(H_t)\to 1$, which follows from a simple calculation. 

An alternative approach to proving the above blowing-up lemma, pioneered by Marton \cite{marton1986simple, marton1996bounding}, is through Talagrand's transportation inequality. A formal proof via this approach can be found in \cite{raginsky2018concentration}.  The key observation here is that for any measure $\nu$ and set $A$, $\nu(A)$ can be related to the KL divergence as
$$
D(\nu_A\| \nu)=\ln\frac{1}{\nu(A)},
$$
where $\nu_A$ is the conditional probability measure defined as $\nu_A(C)\triangleq\nu(C\cap A)/\nu(A)$ for any $C$. Together with the triangle inequality for the Wasserstein distance, this allows us to conclude that, for any $A, B\subseteq \mathbb R^n$,
\begin{align*}
W_2 (\gamma_{ A}, \gamma_{ B})  & \leq W_2 (\gamma_{ A}, \gamma ) +W_2 (\gamma_{ B}, \gamma ) \\
& \leq \sqrt{2D(\gamma_A\| \gamma)} + \sqrt{2D(\gamma_B\| \gamma)} \\
& = \sqrt{2\ln\frac{1}{\gamma(A)}} + \sqrt{2\ln\frac{1}{\gamma(B)}}, 
\end{align*}
where the second inequality follows from Talagrand's transportation inequality in \eqref{E:TI}. The proof of Proposition~\ref{prop:Gaussblowup} follows by taking $B=A_t^c\triangleq\mathbb{R}^n\setminus A_t$ and noting that $W_2 (\gamma_{ A}, \gamma_{ A_t^c})\geq t$.

\subsection{Concentration and Isoperimetry on the Sphere}

We next  show that the stronger transportation inequality \eqref{E:NewIneq} also has a natural geometric counterpart. In particular, it implies the following concentration result on the sphere: 
%
Consider a unit sphere $\mathbb{S}^{n-1}\subseteq \mathbb{R}^n$ equipped the uniform probability measure $\mu$ on $\mathbb{S}^{n-1}$,  denoted by $(\mathbb{S}^{n-1}, \mu)$, where $$\mathbb{S}^{n-1}=\left\{ z^n \in\mathbb{R}^n:\|z^n \|=1 \right\}.$$  Recall that a spherical cap with angle $\theta$ on $\mathbb{S}^{n-1}$ is defined as a ball on $\mathbb{S}^{n-1} $ in the geodesic metric (or simply the angle) $\angle(z^n,y^n)=\arccos(\langle z^n, y^n\rangle)$, i.e.,
$$
\text{Cap}(z^n_0,\theta)\triangleq\left\{ z^n\in \mathbb{S}^{n-1}: \angle( z^n_0,z^n)\leq \theta \right\}.
$$
We will say that an arbitrary set $A\subseteq \mathbb{S}^{n-1}$ has an effective angle $\theta$ if $\mu(A)=\mu (C)$, where $C=\text{Cap}(z^n_0,\theta)$ for some arbitrary $ z^n_0\in\mathbb{S}^{n-1}$. In particular, using the the formula for the area of a spherical cap (see \cite[Appendix C]{WuBarnesOzgur_TIT}), we can show that if $A$ has an effective angle $\theta$, then as $n\to\infty$
\begin{equation}
\mu(A)^{1/n}\rightarrow \,\sin\theta.\label{Pf:blowupsphere_muA}
\end{equation}
\begin{proposition}\label{P:blowupsphere}
Let $A \subseteq \mathbb{S}^{n-1}$ be an arbitrary set with effective angle  $\theta\in(0,\pi/2]$. Then for any $\omega>\pi/2 -\theta$, 
\begin{equation}\label{eq:blowup0}
\mu(A_{\omega }) \to 1 \text{ as } n\to \infty,
\end{equation}
where $A_{\omega}$ is the $\omega$-blowup of $A$ defined as 
$$A_{\omega} \triangleq \{x^n \in\mathbb{S}^{n-1}: \angle (z^n, x^n) \leq \omega\;\text{for some}\; z^n \in A  \}.$$
\end{proposition}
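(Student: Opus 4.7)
The plan is to follow Marton's transportation approach, but to replace Talagrand's inequality in the classical argument by the sharpening of Theorem~\ref{T:NTI}. The crucial point—the one that lets us obtain a spherical rather than Gaussian isoperimetric conclusion—is that when the sharpened inequality is applied to Gaussian measures conditioned on cones, its right-hand side collapses to a quantity that directly encodes spherical measure.

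Concretely, I would first lift to $\dR^n$: for any $A\subseteq \dS^{n-1}$, let $\tilde{A}\triangleq\{z^n\in\dR^n\setminus\{0\}:z^n/\|z^n\|\in A\}$ be its cone, so that $\gamma_n(\tilde{A})=\mu(A)$ by rotation invariance, and let $\gamma_{\tilde A}$ denote $\gamma_n$ conditioned on $\tilde A$. Because $\tilde A$ is scale-invariant, the radial coordinate $\|Z^n\|$ has the same law under $\gamma_{\tilde A}$ as under $\gamma_n$, so $\dE_{\gamma_{\tilde A}}[\|Z^n\|^2]=n$; combining $D(\gamma_{\tilde A}\|\gamma_n)=\ln(1/\mu(A))$ with the identity $D(P\|\gamma_n)=-h(P)+\tfrac{n}{2}\ln(2\pi)+\tfrac{1}{2}\dE_P[\|Z^n\|^2]$ gives $h(\gamma_{\tilde A})=\tfrac{n}{2}\ln(2\pi e)+\ln\mu(A)$, hence $\sqrt{(2\pi e)^{-1}\,e^{(2/n)h(Z^n)}}=\mu(A)^{1/n}$. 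Plugging into Theorem~\ref{T:NTI} collapses the bound to
\begin{equation*}
W_2^2(\gamma_{\tilde A},\gamma_n)\leq 2n\bigl(1-\mu(A)^{1/n}\bigr),
\end{equation*}
with the analogue holding for $B\triangleq A_\omega^c$ and its cone $\tilde B$.

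Next I would sandwich $W_2(\gamma_{\tilde A},\gamma_{\tilde B})$. From the above and the triangle inequality for $W_2$,
\begin{equation*}
W_2(\gamma_{\tilde A},\gamma_{\tilde B})\leq \sqrt{2n}\Bigl(\sqrt{1-\mu(A)^{1/n}}+\sqrt{1-\mu(B)^{1/n}}\Bigr).
\end{equation*}
For the matching lower bound, it suffices to treat $\omega\in(\pi/2-\theta,\pi/2]$ (the range $\omega>\pi/2$ then follows from the monotonicity $A_{\omega_1}\subseteq A_{\omega_2}$ for $\omega_1\leq\omega_2$). For any $(z^n,y^n)\in\tilde A\times\tilde B$, the angle between their directions exceeds $\omega$, so $\langle z^n,y^n\rangle<\|z^n\|\|y^n\|\cos\omega$. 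Then under any coupling, using Cauchy--Schwarz and $\cos\omega\geq 0$,
\begin{equation*}
\dE[\|Z^n-Y^n\|^2]=2n-2\dE[\langle Z^n,Y^n\rangle]\geq 2n-2\cos\omega\,\dE[\|Z^n\|\|Y^n\|]\geq 2n(1-\cos\omega),
\end{equation*}
so $W_2^2(\gamma_{\tilde A},\gamma_{\tilde B})\geq 2n(1-\cos\omega)$.

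Combining the two bounds gives $\sqrt{1-\mu(A_\omega^c)^{1/n}}\geq\sqrt{1-\cos\omega}-\sqrt{1-\mu(A)^{1/n}}$. By \eqref{Pf:blowupsphere_muA}, $\mu(A)^{1/n}\to\sin\theta$, and since $\omega>\pi/2-\theta$ forces $\cos\omega<\sin\theta$, the right-hand side converges to the strictly positive constant $\sqrt{1-\cos\omega}-\sqrt{1-\sin\theta}$. Hence $\mu(A_\omega^c)^{1/n}$ is eventually bounded away from $1$, so $\mu(A_\omega^c)$ decays exponentially in $n$ and $\mu(A_\omega)\to 1$, as required. The main obstacle I foresee is verifying cleanly that the entropy-power factor in Theorem~\ref{T:NTI} collapses \emph{exactly} to $\mu(A)^{1/n}$ under conditioning on a cone; once this computation is in place, the remainder is a standard Marton-style triangle-inequality argument, now driven by the sharper right-hand side that captures surface area on the sphere rather than Gaussian measure of a halfspace.
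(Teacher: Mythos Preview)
Your proposal is correct and follows essentially the same route as the paper: lift $A$ and $A_\omega^c$ to their cones in $\dR^n$, apply Theorem~\ref{T:NTI} to the conditional Gaussians to get $W_2(\gamma_{\tilde A},\gamma)\le\sqrt{2n(1-\mu(A)^{1/n})}$ (the paper verifies the collapse to $\mu(A)^{1/n}$ by a direct computation of $\dE[\|X_A^n\|^2]$ and $h(X_A^n)$ rather than via your KL identity, but the content is identical), then combine the triangle inequality with the angular lower bound $W_2^2(\gamma_{\tilde A},\gamma_{\tilde B})\ge 2n(1-\cos\omega)$ obtained from Cauchy--Schwarz, and conclude using $\mu(A)^{1/n}\to\sin\theta$.
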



As in the case of Gaussian measure concentration, the result in Proposition \ref{P:blowupsphere} is tightly related to the isoperimetric inequality on the sphere. It is easy to see that when $A$ is a spherical cap with angle $\theta$, its blowup $A_{\frac{\pi}{2}-\theta+\epsilon}$ is also a cap (slightly bigger than a halfsphere) whose probability approaches $1$ in high dimensions. Therefore, when $A$ is a spherical cap of angle $\theta$, $\omega=\pi/2-\theta+\epsilon$  is precisely the blowup angle needed for $A_\omega$ to approach probability $1$. Proposition ~\ref{P:blowupsphere} asserts that the same blowup angle is sufficient for any other set $A$ with the same measure, therefore effectively identifying the spherical cap as the extremal set for minimizing the measure of its blowup. 
We next show that Proposition \ref {P:blowupsphere}  can  be derived by properly combining the strengthening \eqref{E:NewTI} of Talagrand's transportation inequality with an argument similar to Marton's procedure.

\begin{proof}[Proof of Proposition \ref{P:blowupsphere}] Fix two sets $A, B \subseteq \mathbb S^{n-1}$ with $\mu (A), \mu(B)>0$. Define the cone extension $\bar A$ of $A$ as
$$\bar A \triangleq \left\{ z^n\in \mathbb R^n:\frac{ z^n}{\|z^n\|}\in A \right\}$$
and define the cone extension $\bar B$ of $B$ similarly. It can be easily seen that the measure of $A, B$ under $\mu$ are the same as the measures of their cone extensions $\bar A, \bar B$ under any rotationally invariant probability measure on $\mathbb R^n$, and in particular, under the standard Gaussian measure $\gamma$, i.e.,
\begin{align*}
\gamma (\bar A )=\mu(A)      \text{ and } \gamma (\bar B )=\mu(B)  .
\end{align*}

Now define two conditional probability measures on $\mathbb R^n$ based on $\bar A, \bar B$:
\begin{align}
\gamma_{A}(C)\triangleq\frac{\gamma (\bar A\cap C)}{\gamma (\bar A)} \text{ and } \gamma_{B}(C)\triangleq\frac{\gamma (\bar B\cap C)}{\gamma (\bar B)} \label{e:defbar}
\end{align}
for arbitrary $C \subseteq \mathbb R^n$. Then $\gamma_{ A}, \gamma_{ B} \ll \gamma$ and we have
\begin{align}
W_2 (\gamma_{ A}, \gamma_{ B}) & \leq W_2 (\gamma_{ A}, \gamma ) +W_2 (\gamma_{ B}, \gamma )  \label{e:san1}\\
&\leq \sqrt{\dE[\|X_A^n\|^2]+n-2n\sqrt{\frac{1}{2\pi e}e^{ \frac{2h(X_{\scaleto{A}{3pt}}^n)}{n} }}} + \sqrt{\dE[\|X_B^n\|^2]+n-2n\sqrt{\frac{1}{2\pi e}e^{ \frac{2h(X_{\scaleto{B}{3pt}}^n)}{n} }}}\label{e:san2}
\end{align}
where $X_A^n \sim \gamma_{ A}$ and $X_B^n \sim \gamma_{ B}$, and \dref{e:san1} follows from the triangle inequality and \dref{e:san2} follows from  Theorem \ref{T:NTI}.  Note that  the density function of $X_A^n$ can be expressed as 
$$
\frac{d\gamma_{ A}}{dx ^n}(x^n) =\frac{\mathbf 1(x^n\in \bar A)}{\gamma  (\bar A)}   \frac{d\gamma }{dx ^n}(x^n) ,
$$
and therefore the second moment $\dE[\|X_A^n\|^2$ is given by
\begin{align}
 \dE_{\gamma_A} [\|X_A^n\|^2] & =\frac{1}{\gamma(\bar A)}\int_{\mathbb R^n} \|x^n\|^2 \mathbf 1(x^n \in \bar A)\gamma (dx^n) \nonumber \\
& =\frac{1}{\gamma(\bar A)}  \dE_{\gamma}[\|X^n\|^2 \mathbf 1(X^n\in \bar A)]\nonumber \\
& = \frac{1}{\gamma(\bar A)}\dE_\gamma [\mathbf 1(X^n\in \bar A)]\dE_\gamma[\|X^n\|^2] \label{Pf:blowupsphere_Expect}\\
 & =n \label{e:asstated1}
\end{align}
and the differential entropy $ h(X_A^n)$ is given by
\begin{align}
 h(X_A^n)&= -\dE_{\gamma_A}\left[\ln \left(  \frac{d\gamma_{ A}}{dx^n}\right)\right] \nonumber \\
 & =-\frac{1}{\gamma(\bar A)}\int_{\mathbb R^n}\ln\left(\frac{1}{\gamma(\bar A)} \frac{d\gamma}{dx^n}(x^n) \right) \mathbf 1 (x^n\in \bar A)  \gamma(dx^n)\nonumber\\
 &=\frac{1}{\gamma(\bar A)}\dE_\gamma\left[\left(\frac{n}{2}\ln(2\pi)+\frac{1}{2}\|X^n\|^2+\ln (\gamma(\bar A))\right) \cdot \mathbf 1(X^n\in \bar A) \right]\nonumber\\
& =\frac{\dE_\gamma[\mathbf 1(X^n\in \bar A)]}{\gamma(\bar A)}\dE_{\gamma}\left[\frac{n}{2}\ln(2\pi)+\frac{1}{2}\|X^n\|^2+\ln (\gamma(\bar A))\right] \label{Pf:blowupsphere_Entropy} \\
 &=\frac{n}{2}  \ln 2\pi e (\gamma  (\bar A))^{2/n} \nonumber  \\
 & = \frac{n}{2}  \ln 2\pi e (\mu  (A))^{2/n}    \label{e:asstated2}
\end{align}
where both \eqref{Pf:blowupsphere_Expect} and \dref{Pf:blowupsphere_Entropy} hold because $\mathbf 1(X^n\in \bar A)$ is independent of $\|X^n\|^2$ (except when $X^n =0$). Similar expressions for $\dE [\|X_B^n\|^2]$ and $ h(X_B^n)$ can also be obtained and thus \dref{e:san2} simplifies to\footnote{Note that applying the original Talagrand's inequality \eqref{E:TI} to $\gamma_{ A}$ and $\gamma_{ B}$ here would yield $W_2 (\gamma_{ A}, \gamma_{ B}) \leq \sqrt{2\ln\frac{1}{\mu( A)}} + \sqrt{2\ln\frac{1}{\mu( B)}}$ instead of \eqref{e:san3}. This inequality is weaker than \eqref{e:san3} and follows from \eqref{e:san3} by using the fact that $\ln x +1\leq x$.}
\begin{align}
W_2 (\gamma_{ A}, \gamma_{ B}) &  \leq \sqrt{2n (1-(\mu(A))^{1/n})    } + \sqrt{2n (1-(\mu(B))^{1/n})    } .   \label{e:san3}
\end{align}
 
On the other hand, we can also obtain a lower bound on $W_2 (\gamma_{ A}, \gamma_{ B}) $. Let $\angle(A,B)$ be the angle distance between $A$ and $B$, defined as 
  $$\angle(A,B)\triangleq \inf\{\angle(x^n,y^n):x^n\in A,y^n \in B\},$$
and assume that $\angle(A,B)\in[0,\pi/2]$ so $\cos(\angle(A,B))\geq 0$.	 To lower bound on $W_2 (\gamma_{ A}, \gamma_{ B}) $, note that for any coupling $P$ of $\gamma_{ A}$ and $\gamma_{B}$ we have
\begin{align}
 \dE_P[\|X^n_A-X^n_B\|^2]&=\dE_{\gamma_A}[\|X^n_A\|]+\dE_{\gamma_B}[\|X ^n_B\|]-2\dE_P[\|X_A^n\|\| X_B^n\|\cos(\angle{(X_A,X_B))}] \nonumber\\
 &\geq 2n-2\dE_P[\|X_A^n\|\| X_B^n\|] \cdot \cos(\angle(A,B))  \nonumber\\
&\ge  2n-2n \cos( \angle(A,B)) \label{Pf:blowupsphere_Dist_ex1} 
\end{align}
where \eqref{Pf:blowupsphere_Dist_ex1} follows from the Cauchy-Schwarz inequality, and therefore we can get the following lower bound on $W_2 (\gamma_{ A}, \gamma_{ B}) $
 \begin{align}
W_2 (\gamma_{ A}, \gamma_{ B}) &  \ge  \sqrt{2n-2n \cos( \angle(A,B))} .   \label{e:san4}
\end{align}
Combining this with \dref{e:san3} gives the following inequality:
 \begin{align}
\sqrt{1-  \cos( \angle(A,B))}  \leq \sqrt{ 1-(\mu(A))^{1/n}   } + \sqrt{ 1-(\mu(B))^{1/n}   }. \label{Pf:blowupsphere_bound}
\end{align}

To finish the proof of Proposition \ref{P:blowupsphere},  fix an arbitrary set $A\subseteq \mathbb S^{n-1}$ with effective angle $\theta\in(0, \frac{\pi}{2}]$ and choose $B=A_\omega^c=\mathbb S^{n-1}\setminus A_\omega$ for $\omega \in (\pi/2-\theta, \pi/2]$. We will use \eqref{Pf:blowupsphere_bound} to show that $\mu(A_\omega^c)\to 0$ as $n\to \infty$. The proof of the proposition for larger $\omega$, follows from the fact that $\mu(A_\omega)$ is increasing in $\omega$. 
Note that by definition, we have  
\begin{align}
\angle(A,A_\omega^c)=\omega. \label{Pf:blowupsphere_DAB}
\end{align}
Plugging this into \dref{Pf:blowupsphere_bound}, and also using \eqref{Pf:blowupsphere_muA} we obtain
 \begin{align}
 \sqrt{ 1-\cos \omega } \leq \sqrt{ 1- \text{sin}\theta}    + \liminf_{n\to\infty} \sqrt{1-(\mu(A_\omega^c))^{1/n} } .
 \end{align}
Therefore, given $\cos \omega < \sin \theta$, i.e.  $\omega>\pi/2 -\theta$,  we have
 \begin{align}
\liminf_{n\to\infty} \sqrt{1-(\mu(A_\omega^c))^{1/n} }>0. 
 \end{align}
This in turn implies that
 \begin{align}
\mu(A_\omega^c) \rightarrow 0   
 \end{align}
as $n\to \infty$, which completes the proof of Proposition \ref{P:blowupsphere}. \end{proof}

\subsection{A New Measure Concentration Result on the Sphere}

We next show that the transportation inequality for information constrained OT leads to a new concentration of measure result on $(\mathbb{S}^{n-1}, \mu)$, which recovers Proposition~\ref{P:blowupsphere} as a special case. This new result was recently proved in \cite{WuBarnesOzgur_TIT, barnes2018isoperimetric} by using Riesz' rearrangement inequality \cite{baernstein} and can be stated as follows:

\begin{proposition}\label{P:strongisoperimetrysphere}
Let $A \subseteq \mathbb{S}^{n-1}$ be an arbitrary set with effective angle  $\theta \in (0,\pi/2]$. Then for any 
$\omega \in (\pi/2 -\theta, \pi/2]$ and $\epsilon>0$,   
\begin{equation}
\mu(  \{y^n: \ln \mu(A\cap \text{Cap}(y^n,\omega ))>   \ln V(\theta, \omega) - n\epsilon \})\to 1,\label{eq:isoperimetry}
\end{equation}
in which $V(\theta, \omega)$ is defined as
\begin{align}
V(\theta, \omega)=\mu(\text{Cap}(z_0^n, \theta) \cap \text{Cap}( y^n_0, \omega)), \label{e: vdef}
\end{align} 
where  $z_0^n, y^n_0$  are perpendicular to each other, i.e.  $\angle(z_0^n, y^n_0)=\pi/2$.  
\end{proposition}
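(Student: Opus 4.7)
The plan is to adapt the Marton-style coupling argument from the proof of Proposition \ref{P:blowupsphere}, swapping Talagrand's inequality for the information-density constrained transportation inequality of Theorem \ref{P:NTI_InfoD}. Two new ingredients enter: the information constraint $R$ is tuned so that the resulting coupling concentrates spherical angles near the target $\omega$, and the information density constraint lifts the average mutual information bound to a uniform high-probability statement over the $y^n$-marginal, which is exactly the form required by the proposition.

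Concretely, I start from the conditional Gaussian measure $\gamma_A$ on the cone extension $\bar A\subseteq\mathbb{R}^n$, reusing from the proof of Proposition \ref{P:blowupsphere} the identities $\dE_{\gamma_A}[\|Z^n\|^2]=n$ and $h(Z^n)=\tfrac{n}{2}\ln\!\bigl(2\pi e\,\mu(A)^{2/n}\bigr)$. Applying Theorem \ref{P:NTI_InfoD} with $P_{Z^n}=\gamma_A$ yields the explicit coupling $Y^n=\sqrt{1-e^{-2R/n}}\,Y_1^n+e^{-R/n}\,Y_2^n$, $Z^n=g(Y_1^n)$, satisfying $I_P(Z^n;Y^n)\le R$, the density deviation $|i_P(Z^n;Y^n)-I_P(Z^n;Y^n)|\le\tau$ with $P$-probability at least $1-6n/\tau^2$, and the sharp inner-product estimate $\dE_P[Z^n\cdot Y^n]\ge n\sqrt{\mu(A)^{2/n}(1-e^{-2R/n})}$. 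Since $\|Z^n\|$ and $\|Y^n\|$ both concentrate at $\sqrt{n}$, the normalized angle $\angle(Z^n,Y^n)$ concentrates in the large-$n$ limit around $\arccos\!\bigl(\sin\theta\cdot\sqrt{1-e^{-2R/n}}\bigr)$. I choose $R$ so that this limit equals $\omega$, i.e.\ $1-e^{-2R/n}\to\cos^2\omega/\sin^2\theta$; this is a legal choice precisely because $\omega\in(\pi/2-\theta,\pi/2]$ forces $\cos^2\omega/\sin^2\theta\in[0,1)$. Then $\{\angle(Z^n,Y^n)\le\omega\}$ has $P$-probability $1-o(1)$.

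I then convert this joint statement into the required pointwise bound using the information density constraint with $\tau=n\epsilon/2$, so that $6n/\tau^2=o(1)$. On the density-typical event, $dP_{Z^n|Y^n=y^n}/d\gamma_A\le e^{R+\tau}$, so integrating the near-unit conditional mass against this density gives, for all $y^n$ in a $(1-o(1))$-measure subset of $\mathbb R^n$,
\begin{equation*}
\gamma_A\!\bigl(\overline{\mathrm{Cap}(y^n/\|y^n\|,\omega)}\bigr)\ \ge\ e^{-R-\tau-o(n)}.
\end{equation*}
Using $\gamma_A(\bar C)=\gamma(\bar A\cap\bar C)/\gamma(\bar A)=\mu(A\cap C)/\mu(A)$ for spherical $C$, this rearranges to $\mu(A\cap\mathrm{Cap}(y^n/\|y^n\|,\omega))\ge \mu(A)\,e^{-R-\tau-o(n)}$. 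A direct spherical-cap volume calculation (parameterizing $x\in\mathbb S^{n-1}$ by the two inner products $a=\langle x,z_0^n\rangle$, $b=\langle x,y_0^n\rangle$ with $\angle(z_0^n,y_0^n)=\pi/2$ and Jacobian factor $(1-a^2-b^2)^{(n-3)/2}$) gives $V(\theta,\omega)^{1/n}\to\sqrt{\sin^2\theta-\cos^2\omega}$, which matches the limiting exponential rate of $\mu(A)e^{-R}$; absorbing the remaining $e^{-o(n)}$ slack into $e^{-n\epsilon}$ yields the claim.

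The main obstacle will be the sharp matching $\mu(A)e^{-R}\sim V(\theta,\omega)$: the coupling's information cost $R$ must line up at the exponential scale with the measure of the perpendicular cap intersection, which is clean but non-trivial and is really the geometric content of the proposition. A secondary technical point is that passing from the expected inner-product bound to the concentration statement $\{\angle(Z^n,Y^n)\le\omega\}$ requires some concentration of $Z^n\cdot Y_1^n$ around its Stein-computed mean $\dE[\mathrm{trace}(J_g(Y_1^n))]$; this follows from regularity of the Brenier map $g$, or can be bypassed by running the argument for $\omega'=\omega-\eta$ and letting $\eta\downarrow 0$ using monotonicity of the cap-intersection measure in $\omega$.
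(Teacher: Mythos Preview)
Your overall plan---couple $\gamma_A$ to $\gamma$ via the construction in Theorem \ref{P:NTI_InfoD}, tune $R$ so that $\mu(A)e^{-R}$ matches $V(\theta,\omega)$ at the exponential scale, and use the information-density bound to pass from the coupling to a pointwise cap-intersection estimate---is exactly the right circle of ideas and aligns with the paper. But there is a genuine gap at the step ``the normalized angle $\angle(Z^n,Y^n)$ concentrates around $\arccos(\sin\theta\sqrt{1-e^{-2R/n}})$.'' The Stein--Jensen chain in the proof of Theorem \ref{P:NTI_InfoD} gives only the \emph{lower bound} $\dE_P[Z^n\cdot Y^n]\ge n\sqrt{\mu(A)^{2/n}(1-e^{-2R/n})}$; it says nothing about the variance of $Z^n\cdot Y_1^n=g(Y_1^n)\cdot Y_1^n$, and a crude second-moment bound (via $\dE[\|Z^n\|^2\|Y_1^n\|^2]\le n^2+2n$) leaves the variance at order $n^2$, so Chebyshev is useless. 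Your two proposed bypasses do not close the gap either: Brenier-map regularity requires convex support, which the cone $\bar A$ over an arbitrary $A\subseteq\mathbb S^{n-1}$ need not have; and tuning to $\omega'=\omega-\eta$ together with Markov on $1-\cos\angle$ yields only $P(\angle\le\omega)\ge(\cos\omega'-\cos\omega)/(1-\cos\omega)$, a fixed constant, which after conditioning produces a good set of $y^n$ of measure bounded below by a constant, not by $1-o(1)$.

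The paper avoids this obstacle by \emph{not} trying to control the angle for a specific coupling. Instead it introduces the bad set $B=\{y^n:\mu(A\cap\mathrm{Cap}(y^n,\omega))\le V(\theta,\omega)e^{-n\beta}\}$ and bounds $W_2(\gamma_A,\gamma_B;R,\tau,6n/\tau^2)$ from above by the triangle inequality (Proposition \ref{P:IDCTRI}) plus Theorems \ref{T:NTI} and \ref{P:NTI_InfoD}, and from below by Lemma \ref{l:WassInfoDLBound}. That lemma is the missing piece: it uses the information-density bound in the \emph{reverse} direction, showing that for \emph{every} admissible coupling and every $x_B^n\in\bar B$, the conditional mass $P(\angle(X_A^n,x_B^n)\le\eta^*_\epsilon\mid X_B^n=x_B^n)$ is exponentially small (because the density is $\le e^{R+\tau}$ times $\gamma_A$, and $\gamma_A$ of the relevant cap is small by definition of $B$). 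Hence most of the coupling mass has $\angle>\eta^*_\epsilon$, forcing $\dE_P[\|X_A^n-X_B^n\|^2]$ to be large. Comparing upper and lower bounds then gives $\mu(B)\to 0$. This ``all couplings'' lower bound is what replaces your unproven concentration claim; to repair your argument you would need either this lemma or an independent proof that $g(Y_1^n)\cdot Y_1^n/n$ concentrates, which is not available for general $A$.
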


In the proposition, $V(\theta, \omega)$ corresponds to the intersection measure of two spherical caps with  poles perpendicular to each other. By using the surface area formula for the intersection of two spherical caps in \cite[Appendix C-B]{WuBarnesOzgur_TIT}, one can provide an asymptotic characterization of  $\ln V(\theta, \omega)$,
\begin{equation}
\frac{1}{n}\ln V(\theta, \omega)\rightarrow \frac{1}{2}\ln(\sin^2\theta -\cos\omega^2), \text{ as }  n\to\infty.\label{e:plugalpha2}
\end{equation}
Note that an equivalent way to state the blowing-up lemma in Proposition~\ref{P:blowupsphere} is the following: Let $A\subseteq \mathbb{S}^{n-1}$ be an arbitrary set with effective angle $\theta \in (0,\pi/2]$. Then for any $\omega \in (\pi/2 -\theta, \pi/2]$,
\begin{align*} 
\mu (\left\{y^n: \mu(A\cap \text{Cap}(y^n,\omega ))> 0 \right\})\to 1.
\end{align*}
This is true because $\mu(A\cap \text{Cap}(y^n,\omega ))> 0$ if and only if $y^n\in A_{\omega}$. Proposition~\ref{P:strongisoperimetrysphere} extends Proposition~\ref{P:blowupsphere} by providing  a lower bound on  $\mu(A\cap \text{Cap}(y^n,\omega ))$ for $\omega \in (\pi/2 -\theta, \pi/2]$. When $A$ itself is a cap, \eqref{eq:isoperimetry} is straightforward and follows from the fact that $Y^n$ w.h.p. concentrates around the equator at angle $\pi/2$ from the pole of $A$, and therefore the intersection of the two spherical caps is given by $V$ w.h.p.  Proposition~\ref{P:strongisoperimetrysphere} asserts that this intersection measure is  w.h.p. lower bounded by $V$ for any arbitrary $A$ with the same measure. In other words, the spherical cap not only minimizes the measure of its neighborhood as captured by Proposition~\ref{P:blowupsphere}, but roughly speaking, also minimizes its intersection measure with the neighborhood of a randomly chosen point on the sphere.

\begin{proof}[Proof of Proposition \ref{P:strongisoperimetrysphere}]
Fix two sets $A,B\subseteq \mathbb S^{n-1}$ with $\mu (A),\mu (B)>0$. Consider their cone extensions $\bar A, \bar B$ and the induced conditional probability measures $\gamma_A, \gamma_B$ as defined in \eqref{e:defbar}. 
Since $\gamma_B \ll \gamma$ and $\gamma$ is absolutely continuous with respect to the Lebesgue measure,  the optimal coupling that attains  $W_2(\gamma_B,\gamma)$ is a one-to-one mapping that pushes $\gamma$ forward to $\gamma_B$. 
By Proposition \ref{P:IDCTRI}, for any $R\geq 0$ and $\tau>0$ we have 
\begin{align}
&W_2(\gamma_A,\gamma_B;  R,\tau,6n/\tau^2) \label{Pf:StrIsoSph_tobound}\\
\leq\ &  W_2(\gamma_A,\gamma; R,\tau,6n/\tau^2)+W_2(\gamma_B,\gamma) \nonumber \\ 
\leq\ &  \sqrt{\dE[\|X_{\small A}^n\|^2]+n-2n\sqrt{\frac{1}{2\pi e}e^{ \frac{2h(X_{\scaleto{A}{3pt}}^n)}{n} }\left(1-e^{-\frac{2R}{n}}\right)}} + \sqrt{\dE[\|X_B^n\|^2]+n-2n\sqrt{\frac{1}{2\pi e}e^{ \frac{2h(X_{{\scaleto{B}{3pt}}}^n)}{n}}}}
\label{Pf:StrIsoSph_tri_ex1}\\
= \ & \sqrt{2n\left(1-(\mu(A))^{1/n} 
\sqrt{1-e^{-2R/n}}\right)}+\sqrt{2n(1-(\mu(B))^{1/n})}\label{Pf:StrIsoSph_tri} \end{align}
where $X^n_A\sim \gamma_A$ and $X^n_B\sim \gamma_B$;  \eqref{Pf:StrIsoSph_tri_ex1} follows from Theorem \ref{P:NTI_InfoD} and Theorem \ref{T:NTI}; and \dref{Pf:StrIsoSph_tri} follows because $\dE[\|X_A^n\|^2]=n$ and $h(X_A^n)= \frac{n}{2}  \ln 2\pi e (\mu  (A))^{2/n}$ as respectively stated in \dref{e:asstated1} and \dref{e:asstated2}, and  similar expressions hold for  $\dE[\|X_B^n\|^2]$ and $h(X_B^n)$.

On the other hand, we can also obtain a lower bound on 
\eqref{Pf:StrIsoSph_tobound}. 
For any $\eta\in[0,\pi]$, let the function $\alpha(\eta)$ be defined as 
\begin{align}
\alpha(\eta) &\triangleq \frac{1}{n}\left(\ln  \mu (A) -R-\sup_{y^n\in B}\{\ln \mu(\text{Cap}(y^n,\eta)\cap A)\}\right) \label{D:W_InfD_alpha} 
\end{align}
and for any $\epsilon>0$ define the  parameter $\eta_\epsilon^*$ as 
\begin{align}
\eta_\epsilon^*\triangleq \sup\{\eta:\alpha(\eta)\ge \epsilon\}.\label{D:W_InfD_eta*}
\end{align}
The following lemma states a lower bound of \eqref{Pf:StrIsoSph_tobound} in terms of $\eta_\epsilon^*$ that will be useful for proving Proposition  \ref{P:strongisoperimetrysphere}. The proof of this lemma will be presented after we finish the proof of Proposition  \ref{P:strongisoperimetrysphere}.
\vspace{2mm}
\begin{lemma}\label{l:WassInfoDLBound}
For any $\epsilon>0$,
\begin{align*}
W_2(\gamma_A,\gamma_B;R,\tau,6n/\tau^2)\ge \sqrt{2n(1-
\cos \eta_\epsilon^*-\sigma(n, \tau))}
\end{align*}
where 
$\sigma(n, \tau) \to 0$ as $\tau/n,n/\tau^2\to 0$ and $n\to \infty$. 
\end{lemma}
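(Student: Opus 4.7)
\emph{Proof plan.}
The plan is to establish that under any coupling $P \in \Pi(\gamma_A, \gamma_B)$ satisfying the $(R,\tau,6n/\tau^2)$-information density constraint, the angle $\Theta \triangleq \angle(Z_A^n,Z_B^n)$ between the normalized vectors $Z_A^n \triangleq X_A^n/\|X_A^n\|$ and $Z_B^n \triangleq X_B^n/\|X_B^n\|$ exceeds $\eta_\epsilon^*$ with overwhelming probability, and then to convert this angular concentration into the desired lower bound on transport cost through the identity
\begin{equation*}
\dE_P[\|X_A^n - X_B^n\|^2] = 2n - 2\,\dE_P[\|X_A^n\|\,\|X_B^n\|\cos\Theta].
\end{equation*}

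First I would note that under the product measure $Q \triangleq \gamma_A \otimes \gamma_B$, rotational invariance of $\gamma$ together with the cone structure of $\bar A, \bar B$ implies that $Z_A^n, Z_B^n$ are independent and uniform on $A, B$ respectively, giving, for any $\eta \in [0,\pi]$,
\begin{equation*}
Q(\Theta \le \eta) \le \frac{\sup_{y^n \in B}\mu(A\cap \text{Cap}(y^n,\eta))}{\mu(A)} = e^{-R-n\alpha(\eta)}.
\end{equation*}
Then a change-of-measure argument transfers this bound to $P$: on the event $\{i_P(X_A^n;X_B^n) \le R+\tau\}$, which has probability at least $1-6n/\tau^2$ by the information density constraint, we have $dP/dQ \le e^{R+\tau}$, whence
\begin{equation*}
P(\Theta \le \eta) \le e^{\tau - n\alpha(\eta)} + 6n/\tau^2.
\end{equation*}
Since $\alpha(\cdot)$ is non-increasing in $\eta$, for any $\eta < \eta_\epsilon^*$ we have $\alpha(\eta) \ge \epsilon$, so $P(\Theta \le \eta)$ vanishes under the scaling $\tau/n,\, n/\tau^2 \to 0$ and $n\to\infty$.

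Next I would split $\dE_P[\|X_A^n\|\,\|X_B^n\|\cos\Theta]$ according to the events $\{\Theta \le \eta\}$ and $\{\Theta > \eta\}$. On the complement event, $\cos\Theta \le \cos\eta$, and Cauchy--Schwarz combined with $\dE[\|X_A^n\|^2] = \dE[\|X_B^n\|^2] = n$ gives a contribution at most $n\cos\eta$; on the small event $\{\Theta \le \eta\}$, Cauchy--Schwarz combined with the fact that $\|X_A^n\|^2,\|X_B^n\|^2\sim \chi^2(n)$ (so their fourth moments equal $n(n+2)$) yields a contribution of order $n\sqrt{P(\Theta\le \eta)}$, which is negligible relative to $n$ by the previous step. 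Letting $\eta \uparrow \eta_\epsilon^*$ and collecting the vanishing terms into a single $\sigma(n,\tau)$ yields
\begin{equation*}
\dE_P[\|X_A^n - X_B^n\|^2] \ge 2n\bigl(1 - \cos\eta_\epsilon^* - \sigma(n,\tau)\bigr),
\end{equation*}
and taking the infimum over admissible couplings completes the proof.

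The main obstacle will be carefully bookkeeping the error terms so that $\sigma(n,\tau) \to 0$ under the joint scaling in the lemma statement; in particular, choices such as $\tau = n^{3/4}$ will make both $e^{\tau - n\epsilon}$ and $6n/\tau^2$ vanish. A minor subtlety is that the complement-step bound is cleanest when $\cos\eta \ge 0$, but this is easy to handle by driving $\eta$ up to $\eta_\epsilon^*$ from below and using continuity of $\cos$, and the final statement $1 - \cos\eta_\epsilon^* - \sigma(n,\tau)$ is well-defined regardless of the sign of $\cos\eta_\epsilon^*$.
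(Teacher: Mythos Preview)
Your proposal is correct and shares the paper's two-step skeleton---first show that under any admissible coupling the angle $\angle(X_A^n,X_B^n)$ exceeds $\eta_\epsilon^*$ with overwhelming probability, then convert this into a lower bound on $\dE_P[\|X_A^n-X_B^n\|^2]$---but your execution of each step is different and noticeably lighter. For the angle bound, the paper works in Euclidean coordinates: it intersects with a ``nice'' set $S$ on which both norms lie within $\tau$ of $n$ and the information density lies within $\tau$ of $R$, bounds the conditional density $f_{X_A^n|X_B^n}$ pointwise on $S$ by $e^{R-h(\gamma_A)+3\tau/2}$, and then controls the Lebesgue volume of the relevant slab via the volume of a ball of radius $\sqrt{n+\tau}$. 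Your change-of-measure argument bypasses all of this: the single observation $dP/dQ\le e^{R+\tau}$ on $\{i_P\le R+\tau\}$, combined with the fact that under $Q=\gamma_A\otimes\gamma_B$ the normalized vectors are independent and uniform on $A,B$, delivers the same exponent $e^{\tau-n\alpha(\eta)}$ in one line with no norm truncation or volume estimate. For the cost decomposition, the paper conditions on $F=\{\angle\ge\eta_\epsilon^*\}\cap S$ and exploits the deterministic norm control on $S$ to bound $\|X_A^n\|\,\|X_B^n\|$ by $n+\tau$, treating $F^c$ through a somewhat delicate Cauchy--Schwarz identity; you instead split only on the angle and use that $\|X_A^n\|^2,\|X_B^n\|^2\sim\chi^2(n)$ exactly (by the cone structure) to get a clean fourth-moment bound $\sqrt{n(n+2)}\sqrt{P(\Theta\le\eta)}$ on the bad event. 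What your route buys is brevity and a more transparent role for the information-density constraint as a Radon--Nikodym bound; what the paper's route buys is that the deterministic norm control on $S$ makes all inequalities pointwise, which can be convenient if one later wants finer quantitative control. The sign issue you flag when $\cos\eta_\epsilon^*<0$ is equally present in the paper's bound $\dE_P[X_A^n\cdot X_B^n\mid F]\le(n+\tau)\cos\eta_\epsilon^*$ and is immaterial for the intended application, where $\omega\in(\pi/2-\theta,\pi/2]$.
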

\vspace{2mm}

By lemma \ref{l:WassInfoDLBound} and \eqref{Pf:StrIsoSph_tri}, we get
\begin{align}
\sqrt{1-\cos \eta_\epsilon^*-\sigma(n, \tau)}&
\leq \sqrt{1-(\mu(A))^{1/n} 
\sqrt{1-e^{-2R/n}}}+\sqrt{1-(\mu(B))^{1/n}}, \label{Pf:StrIsoSph_trif}
\end{align}
for any $ \epsilon>0$. To finish the proof of Proposition \ref{P:strongisoperimetrysphere}, fix an arbitrary set $A\subseteq \mathbb S^{n-1}$ with effective angle $\theta\in(0,\frac{\pi}{2}]$ and let 
\begin{align}
B \triangleq \{y^n\in\mathbb S^{n-1}:\ln \mu(A\cap \text{Cap }(y^n,\omega))\leq \ln V(\theta,\omega)-n\beta \}, \nonumber 
\end{align} 
for some arbitrary $\omega\in(\pi/2-\theta,\pi/2]$ and $\beta>0$, where $V(\theta,\omega)$ is as defined in \dref{e: vdef}.  In the sequel, we will  use \dref{Pf:StrIsoSph_trif} to show that $\mu(B) \to 0$ as $n\to \infty$.

To do this, we will apply  \eqref{Pf:StrIsoSph_trif} for a particular choice of $R>0$ and $\epsilon=\frac{\beta}{4}$. Note that \eqref{Pf:blowupsphere_muA} combined with the fact that $\sin\theta>\cos\omega$ implies that
$$
\lim_{n\to\infty} (\mu(A))^{1/n}>\cos\omega.
$$
This implies that there exists a fixed $\phi >0$ such that for sufficiently large $n$
\begin{equation}
(\mu(A))^{1/n}\geq \cos (\omega -\phi).\label{eq:auxeq}
\end{equation}
Therefore,  
letting $R$ be
\begin{align}
    R =  \frac{n}{2} \ln \frac{(\mu(A))^{2/n}}{(\mu(A))^{2/n}-\cos^2(\omega-\phi)} \label{e:defineR}, 
\end{align} 
we have that $R> 0$  for $n$ sufficiently large. We will also assume that $\phi>0$ is chosen sufficiently small so that 
 \begin{align}
    R \leq   \frac{n}{2} \ln \frac{(\mu(A))^{2/n}}{(\mu(A))^{2/n}-\cos^2 \omega } + n\frac{\beta}{8}  \label{e:extra}. 
\end{align}
Note that this is always possible since choosing $\phi$ smaller makes it easier to satisfy \eqref{eq:auxeq}. With the choice of $R$ in \eqref{e:defineR}, the first term on the R.H.S. of \eqref{Pf:StrIsoSph_trif} reduces to  
\begin{align}
\sqrt{1-(\mu(A))^{1/n} 
\sqrt{1-e^{-2R/n}}}=\sqrt{1-\cos (\omega-\phi)}. \label{e:final_comb1}
\end{align}

Now we will focus on the L.H.S. of \eqref{Pf:StrIsoSph_trif} and show that it can be lower bounded by 
$$\sqrt{1-\cos \eta_\epsilon^*-\sigma(n, \tau)} \ge \sqrt{1-\cos \omega -\sigma(n, \tau)} $$
 by choosing $\epsilon=\frac{\beta}{4}$. For this, we evaluate $\alpha(\eta)$ at $\eta=\omega$ under our choice of $A,B$ and $R$:
\begin{align}
\alpha(\omega)&=\ \frac{1}{n}\left(-\sup_{y^n\in B}\{\ln \mu(\text{Cap}(y^n,\omega)\cap A)\}+\ln \mu (A)-R\right) \nonumber \\
&\ge \ \frac{1}{n}(n\beta -\ln V(\theta,\omega)+\ln\mu(A)-R). \label{Pf:StrIsoSph_omega_ex1} 
\end{align}
In \dref{Pf:StrIsoSph_omega_ex1}, we can easily lower bound  $\ln\mu(A)$ by 
\begin{align}
\ln\mu(A) \geq   n \ln \sin \theta -n \frac{\beta}{4}  \label{e:plugalpha1}
\end{align}
for $n$ sufficiently large.  Also, by using \eqref{e:plugalpha2}, we have  
\begin{align}
\ln V(\theta,\omega) & \leq \frac{n}{2}\ln( \sin^2\theta-\cos^2\omega)+n\frac{\beta}{4}  \label{e:plugalpha2}
\end{align}
for $n$ sufficiently large. Moreover, using \eqref{Pf:blowupsphere_muA} in \eqref{e:extra}, $R$ can be further bounded by
 \begin{align}
    R \leq  \frac{n}{2} \ln \frac{\sin^2\theta}{\sin^2\theta-\cos^2 \omega } + n\frac{\beta}{4} \label{e:plugalpha3}, 
\end{align} 
for $n$ sufficiently large.
Plugging \dref{e:plugalpha1}--\dref{e:plugalpha3} into \dref{Pf:StrIsoSph_omega_ex1} , we obtain
\begin{align*}
\alpha(\omega) &\ge    \frac{\beta}{4}, 
\end{align*}
and therefore 
\begin{align}
\omega \leq \eta_{\beta/4}^* \label{Pf:StrIsoSph_omega_ex2}
\end{align}
by the definition of $\eta_{\epsilon}^*$ and the nonincreasing property of $\alpha(\eta)$. Hence, by setting $\epsilon=\beta/4$ on the L.H.S. of \eqref{Pf:StrIsoSph_trif}   and using \dref{Pf:StrIsoSph_omega_ex2}, we obtain
\begin{align}
\sqrt{1-\cos \eta_{\beta/4}^*-\sigma(n, \tau)} \ge \sqrt{1-\cos \omega -\sigma(n, \tau)}. \label{e:final_comb2}
\end{align}
Combining \dref{Pf:StrIsoSph_trif}, \dref{e:final_comb1} and \dref{e:final_comb2} yields 
$$ \sqrt{1-\cos \omega -\sigma(n, \tau)} \leq  \sqrt{1-\cos (\omega-\phi)}+\sqrt{1-(\mu(B))^{1/n}}.  $$
Setting $\tau=n^{3/4}$, we have $\tau/n,n/\tau^2\to 0$ as $n\to \infty$, and thus $\sigma(n, n^{3/4}) \to 0$ as $n\to \infty$.  Therefore, given $\cos \omega <\cos(\omega-\phi)$ and for sufficiently large $n$, we have 
\begin{align}
\mu(B)\leq \left(1-\left(\sqrt{1-\cos\omega-\sigma(n, n^{3/4})}-\sqrt{1-\cos(\omega-\phi)}\right)^2\right)^n,
\end{align}
which tends to zero as $n\to \infty$. 
This completes the proof of Proposition \ref{P:strongisoperimetrysphere}.
\end{proof}

\vspace{2mm}

\begin{proof}[Proof of Lemma \ref{l:WassInfoDLBound}]
Consider an arbitrary coupling $P$ of $(\gamma_A ,\gamma_B )$ that satisfies the $(R,\tau,6n/\tau^2)$-information density constraint.
To find a lower bound on $W_2(\gamma_A,\gamma_B;R,\tau,6n/\tau^2)$, it suffices to lower bound $\dE_P[\|X_A^n-X_B^n\|^2]$,  or equivalently to upper bound $\dE_P[X_A^n\cdot X_B^n]$. Fix $\epsilon>0$ and  define 
$$F=\{\angle(X_A^n,X_B^n)\ge \eta_\epsilon^*,(X^n_A,X_B^n)\in S\}$$
where
 \begin{align}
S\ = &\left\{(x_A^n,x_B^n):|\|x_A^n\|^2-n|\leq \tau ,|\|x_B^n\|^2-n|\leq  \tau ,i_P(x_A^n;x_B^n)\leq R+\tau\right\}.\nonumber
\end{align} 
Then $\dE_P[X_A^n\cdot X_B^n]$ can be upper bounded by conditioning on $F$ and $F^c$ respectively, i.e.,  
 \begin{align*}
\dE_P[X_A^n\cdot X_B^n] & = \dE_P[X_A^n\cdot X_B^n|F ] \mathbb P (F ) +\dE_P[X_A^n\cdot X_B^n|F^c] \mathbb P (F^c)   \\
& \leq  \dE_P[X_A^n\cdot X_B^n|F ]   +\dE_P[X_A^n\cdot X_B^n|F^c] \mathbb P (F^c)    .
\end{align*} 
In the sequel, we will upper bound $\dE_P[X_A^n\cdot X_B^n|F ]$ and $\dE_P[X_A^n\cdot X_B^n|F^c]  \mathbb P (F^c)$ respectively. 

First, from the definition of $F$, we have
 \begin{align}
\dE_P[X_A^n\cdot X_B^n|F ] & = \dE_P[ \|X_A^n\| \|X_B^n\| \cos (\angle(X_A^n,X_B^n))    |F ] \nonumber\\
&\leq (n+\tau) \cos (\eta_\epsilon^*). \label{Pf:StrIsoSph_tobound_bound1} 
\end{align} 
Also, by the Cauchy-Schwarz inequality, we have
\begin{align}
\dE_P[X_A^n\cdot X^n_B |F^c]\mathbb P(F^c)
& \leq \sqrt{\dE_P[\|X^n_A\|^2|F^c]\mathbb P(F^c)}\sqrt{ \dE_P[\|X^n_B\|^2|F^c]\mathbb P(F^c)}\nonumber \\
  &= \sqrt{\dE[\|X_A^n\|^2]-\dE_P[\|X_A^n\|^2|F] \mathbb P(F)}
\sqrt{\dE[\|X_B^n\|^2]-\dE_P[\|X_B^n\|^2|F]\mathbb  P(F)}\nonumber\\ 
 &\leq  n-(n-\tau) \mathbb P(F).\label{Pf:StrIsoSph_tobound_bound2}
\end{align}
To continue with \dref{Pf:StrIsoSph_tobound_bound2}, we need to lower bound $\mathbb P(F)$. Since $\mathbb P(F)$ can be written as
$$\mathbb P(F)=\mathbb P((X_A^n,X_B^n)\in S)-\mathbb P(\angle(X_A^n,X_B^n)\leq \eta_\epsilon^*, (X_A^n,X_B^n)\in S),$$ 
we will bound $\mathbb P((X_A^n,X_B^n)\in S)$ and $\mathbb P(\angle(X_A^n,X_B^n)\leq \eta_\epsilon^*, (X_A^n,X_B^n)\in S)$ respectively. 

To bound $\mathbb P((X_A^n,X_B^n)\in S)$, note that
\begin{align}
\mathbb P(|\|X_A^n\|^2-n|\leq \tau)&=\int_{\bar A} \mathbf 1(|\|x_A^n\|^2-n|\leq r) \gamma_A(dx^n_A) \nonumber \\
&=\frac{1}{\gamma(\bar A)}\int_{\mathbb R^n} \mathbf 1(|\|x^n\|^2-n|\leq \tau)\mathbf 1(x^n\in \bar A)\gamma(dx^n)  \nonumber \\
&=\frac{1}{\gamma(\bar A)}\dE_\gamma[\mathbf 1(|\|X^n\|^2-n|\leq \tau)\mathbf 1(X^n\in \bar A)] \nonumber \\
&=\frac{1}{\gamma(\bar A)}\dE_\gamma[\mathbf 1(|\|X^n\|^2-n|\leq \tau)]\dE_\gamma[\mathbf 1(X^n\in \bar A)] \label{Pf:StrIsoSph_tobound_prob1_ex1}\\
&=\mathbb P (|\|X^n\|^2-n|\leq \tau)\nonumber\\
&\ge 1- 3n/\tau^2, \label{Pf:StrIsoSph_tobound_prob1_ex2} 
\end{align}
where $X^n\sim \gamma$, \eqref{Pf:StrIsoSph_tobound_prob1_ex1} holds because 
$\mathbf 1(|\|X^n\|^2-n|\leq \tau)$ and $\mathbf 1(X^n\in \bar A)$ are independent (except when $X^n=0$), and \eqref{Pf:StrIsoSph_tobound_prob1_ex2} follows from Chebyshev's inequality.  
Similarly, $\mathbb P(|\|X_B^n\|^2-n|\leq \tau)\ge 1-3n/\tau^2$. In addition, since $P$ satisfies the $(R,\tau,6\tau^2/n)$- information density constraint, we have $$\mathbb P(i_P(X_A^n;X_B^n)\leq R+\tau)\ge 1-6n/\tau^2.$$ Therefore, by the union bound we have 
\begin{align}
\mathbb P((X_A^n,X_B^n)\in S)\ge 1-12 n /\tau^2. \label{Pf:StrIsoSph_tobound_prob1}
\end{align}

To upper bound $\mathbb P(\angle(X_A^n,X_B^n)\leq \eta_\epsilon^*, (X_A^n,X_B^n)\in S)$, we have 
 \begin{align}
&\mathbb P(\angle(X_A^n,X_B^n)\leq \eta_\epsilon^*, (X_A^n,X_B^n)\in S) \nonumber\\
& =\ \int_{\bar B} \int_{\bar A}  f_{X_A^n|X_B^n}(x_A^n|x_B^n) \mathbf 1\left( (x_A^n,x_B^n) \in S,\angle(x_A^n,x_B^n)\leq \eta_\epsilon^*\right)  dx_A^nf_{X_B^n}(x_B^n) dx_B^n \nonumber\\
 & \leq\ \int_{\bar B}\int_{\bar {A}}e^{R-h(\gamma_A)+\frac{3}{2}\tau} \mathbf 1((x_A^n,x_B^n)\in S, \angle(x_A^n,x_B^n)\leq \eta_\epsilon^*)dx_A^nf_{X_B^n}(x_B^n)dx_B^n \label{Pf:StrIsoSph_tobound_prob2_ex1}\\ 
 & \leq\ \int_{\bar B}e^{R-h(\gamma_A)+\frac{3}{2}\tau}e^{-n\epsilon+h(\gamma_A)-R+\frac{1}{2} \tau+n\epsilon_1}f_{X^n_B}(x_B^n)dx_B^n\label{Pf:StrIsoSph_tobound_prob2_ex2}\\
 &= \ e^{-n(\epsilon -\frac{2\tau}{n}-\epsilon_1)}\nonumber\\ 
 &\leq \ \epsilon_2,\label{Pf:StrIsoSph_tobound_prob2}
 \end{align}
where $\epsilon_1 \to 0$ as $n\to \infty$, and
$\epsilon_2 \to 0$ as $n\to \infty$ and $\frac{\tau}{n} \to 0$. In the above, \eqref{Pf:StrIsoSph_tobound_prob2_ex1} holds because for each $(x_A^n,x_B^n)\in S\cap (\bar A\times \bar B)$,  the conditional density $f_{X^n_A|X^n_B}(x^n_A|x^n_B)$ satisfies
\begin{align}
f_{X^n_A|X^n_B}(x^n_A|x^n_B)\ &= \ e^{i_P(x^n_A;x^n_B)}f_{X_A^n}(x^n_A)\nonumber \\
\ & \leq \ e^{R+\tau} e^{-\frac{n}{2}\ln (2\pi e \mu(A)^{2/n})+\frac{1}{2}(n-\|x_A^n\|^2)} \label{Pf:StrIsoSph_tobound_prob2_ex1_ex1}
\\& \leq \ e^{R-h(\gamma_A)+\frac{3}{2}\tau} , \label{Pf:StrIsoSph_tobound_prob2_ex1_ex2}
\end{align}
where \eqref{Pf:StrIsoSph_tobound_prob2_ex1_ex1} and \eqref{Pf:StrIsoSph_tobound_prob2_ex1_ex2} follows from the facts that $i_P(x_A^n;x_B^n)\leq R+\tau$ and $|\|x_A^n\|^2-n|\leq \tau$ respectively by the definition of $S$. Inequality
\eqref{Pf:StrIsoSph_tobound_prob2_ex2} holds because for each $x_B^n\in \bar B$, 
we have 
\begin{align}
 \int_{\bar A} \mathbf 1((x_A^n,x_B^n)\in S,\angle(x_A^n,x_B^n) \leq \eta_\epsilon^n)dx_A^n 
&\leq \mu(A\cap \text{Cap}(x_B^n,\eta_\epsilon^*))|B(0,\sqrt{n+\tau})|\nonumber\\
 &\leq    \mu (A\cap \text{Cap}(x_B^n,\eta_\epsilon^*)) e^{\frac{n}{2}\ln (2\pi e )+\frac{1}{2}\tau+n\epsilon_1 } \label{Pf:StrIsoSph_tobound_prob2_ex2_2} \\
& \leq  e^{-n\alpha(\eta_\epsilon^*)+\ln(\mu(A))-R}  e^{\frac{n}{2}\ln (2\pi e )+\frac{1}{2} \tau+n\epsilon_1 } \label{Pf:StrIsoSph_tobound_prob2_ex2_3}\\
& \leq  e^{-n\epsilon+\ln(\mu(A))-R}  e^{\frac{n}{2}\ln (2\pi e )+\frac{1}{2} \tau+n\epsilon_1 } \label{Pf:StrIsoSph_tobound_prob2_ex2_new}\\
&=    e^{-n\epsilon+h(\gamma_A)-R+\frac{1}{2} \tau+n\epsilon_1 }, \nonumber
\end{align}
where $|B(0,\sqrt{n+\tau})=\{x^n:\|x^n\|\leq \sqrt{n+\tau}\}|$ denotes the volume of the Euclidean ball with center $0$ and radius $\sqrt{n+\tau}$. Here, \eqref{Pf:StrIsoSph_tobound_prob2_ex2_2} holds because from \cite[Lemma 13]{WuBarnesOzgur_TIT}, we have 
 $$|B(0,\sqrt{n+\tau})|\leq e^{\frac{n}{2}\ln(2\pi e(1+\frac{\tau}{n}))+n\epsilon_1}\leq e^{\frac{n}{2}\ln 2\pi e+\frac{1}{2}\tau+n\epsilon_1 },
 $$
 where the last inequality uses the fact $\ln(1+a)\leq a$ for any $a\ge 0$,
 \eqref{Pf:StrIsoSph_tobound_prob2_ex2_3} follows from the definition of  $\alpha(\eta)$, and \dref{Pf:StrIsoSph_tobound_prob2_ex2_new} holds  because $\alpha(\eta)$ is continuous in $\eta$ by Lemma \ref{l:W2alphaisCont} and hence $\alpha(\eta_\epsilon^*)\ge \epsilon$.

Combining \eqref{Pf:StrIsoSph_tobound_prob1} and  \eqref{Pf:StrIsoSph_tobound_prob2}, we have
\begin{align}
 \mathbb P (F)&\ge  1 -12n/\tau^2 -\epsilon_2 \nonumber \\
& \ge 1-\epsilon_3 \label{Pf:StrIsoSph_tobound_prob} 
\end{align}
where $\epsilon_3\to 0$ as $n\to \infty$, $n/\tau^2\to 0$ and $\tau/n\to 0$. Combining \eqref{Pf:StrIsoSph_tobound_bound1}, \eqref{Pf:StrIsoSph_tobound_bound2} and \eqref{Pf:StrIsoSph_tobound_prob}, we have 
\begin{align}
\dE_P[X_A^n\cdot X_B^n] 
 \leq n(\cos\eta_\epsilon^* +\sigma(n,\tau))
\end{align}
where $\sigma(n,\tau) \to 0$ as $n\to \infty$, $n/\tau^2\to 0$ and $\tau/n\to 0$, and therefore
$$\dE_P[\|X_A^n-X_B^n\|^2] \ge 2n(1-\cos\eta_\epsilon^*-\sigma(n,\tau)).$$
Since the above inequality holds for any coupling $P$ of $(\gamma_A,\gamma_B)$ that satisfies the $(R,\tau,6n/\tau^2)$-information constraint, we can conclude that 
$$W_2(\gamma_A,\gamma_B;R,\tau,6n/\tau^2)\geq 2n(1-\cos\eta_\epsilon^*-\sigma(n,\tau)).$$
This completes the proof of Lemma \ref{l:WassInfoDLBound}.
\end{proof}

\section{An Application to Network Information Theory}\label{S:Application}

We next demonstrate an application of our transportation inequalities in network information theory. In particular, we show that the information constrained transportation  inequality  can be used to recover the recent solution of a  problem posed by Cover in 1987 \cite{cover1987capacity} regarding the capacity of the relay channel.

To describe Cover's problem, consider a Gaussian primitive relay channel given by 
\begin{numcases}{}
Z=X+W_1\nonumber \\
Y=X+W_2 \nonumber
\end{numcases}
where $X$ denotes the source signal constrained to average power $P$, $Z$ and $Y$ denote the received signals of the relay and the destination  respectively, and $W_1\sim \mathcal N(0,N)$ and $W_2\sim \mathcal N(0,1)$ are Gaussian noises that are independent of each other and $X$. The relay channel is  ``primitive'' in the sense that the relay is connected to the destination with an isolated bit pipe of capacity $C_0$.   Let $C(C_0)$ denote the capacity of this relay channel as a function of $C_0$. What is the critical value of $C_0$ such that $C(C_0)$ first equals $C(\infty)$? This is problem posed by Cover in \emph{Open Problems in Communication and Computation}, Springer-Verlag, 1987 \cite{cover1987capacity},  which he calls ``The Capacity of the Relay Channel''. 

This question was answered in a recent work \cite{WuBarnesOzgur_TIT, wu2017geometry}, which shows that $C(C_0)$ can not be equal to $C(\infty)$ unless $C_0=\infty$, regardless of the SNR of the Gaussian channels. This result follows as a corollary to a new upper bound developed in \cite{WuBarnesOzgur_TIT, wu2017geometry} on the capacity of this channel, which builds on a strong data processing inequality (SDPI) for a specific Markov chain. The proof of this SDPI in \cite{WuBarnesOzgur_TIT, wu2017geometry} is geometric and heavily relies on the new measure concentration result stated in Proposition \ref{P:strongisoperimetrysphere}. We next show that the transportation inequality we develop in the current paper can also be used to establish this SDPI providing a much shorter and simpler proof. We now state the SDPI and briefly illustrate how it leads to a new upper bound on the relay channel. We then prove it by using the conditional version of the information constrained transportation inequality as stated in Theorem \ref{T:CNTI_Info}.

\subsection{A Strong Data Processing Inequality}

Consider a long Markov chain 
\begin{align}Y^n-X^n-Z^n-U_n, \label{e:lmc}\end{align} 
with $Z^n=X^n+W_1^n$ and $Y^n=X^n+W_2^n$, where $\dE[\|X^n\|^2]=nP$, $W_1^n \sim \mathcal N(0, N I_n)$, $W_2^n\sim \mathcal N(0, I_n)$, and $X^n,W_1^n,W_2^n$ are mutually independent. For this long Markov chain, the following SDPI was established in \cite{WuBarnesOzgur_TIT, wu2017geometry} and is the key step in resolving Cover's problem.

\begin{proposition}\label{P:SDPI}
For the Markov chain described in \eqref{e:lmc}, if $I(Z^n;U_n|Y^n)\leq nC_0$, then   $I(X^n;U_n|Y^n)$ is upper bounded by  
\begin{align}
I(X^n;U_n|Y^n)\leq \max_{C'\in [0,C_0]}\min_{r>0}\frac{n}{2}\ln  \frac{P(N+1-2 e^{-C'}\sqrt{N(1-e^{-2r})}  )+N (1-e^{-2C'}(1-e^{-2r}))}{(P+1)Ne^{-2r}} .   \label{eq:longeq_mi}
\end{align}
 \end{proposition}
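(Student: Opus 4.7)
The plan is to use Theorem \ref{T:CNTI_Info} to construct an auxiliary Gaussian coupling of $Z^n$ that translates the mutual information constraint $I(Z^n; U_n | Y^n) \leq nC_0$ into a Wasserstein distance bound, and then convert this into the desired entropy inequality on $I(X^n;U_n|Y^n)$.

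First, I would rewrite
$$I(X^n; U_n | Y^n) = h(X^n | Y^n) - h(X^n | Y^n, U_n)$$
and upper bound $h(X^n | Y^n) \leq \frac{n}{2}\ln\bigl(2\pi e\cdot\frac{P}{P+1}\bigr)$ via Gaussian maximization of differential entropy together with the LMMSE bound on the conditional covariance of $X^n$ given $Y^n$ (which uses only the power constraint $\dE[\|X^n\|^2]\leq nP$ and $Y^n=X^n+W_2^n$). The main task then becomes lower bounding $h(X^n | Y^n, U_n)$.

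For the lower bound, I would apply Theorem \ref{T:CNTI_Info} to obtain a coupling between $Z^n$ (suitably normalized and conditioned on $Y^n$) and an auxiliary standard Gaussian $V^n$ satisfying an information constraint of the form $I(Z^n; V^n | Y^n,U_n) \leq nr$, where $r>0$ is a free parameter to be optimized. The resulting Wasserstein bound, combined with the given constraint $I(Z^n; U_n | Y^n) \leq nC_0$ and setting $C' = I(Z^n; U_n | Y^n)/n$, yields an inequality on the conditional second moment that can be rearranged into a lower bound on $h(Z^n | Y^n, U_n)$. This is then converted into a lower bound on $h(X^n | Y^n, U_n)$ through the relationship $Z^n = X^n + W_1^n$ using the conditional entropy power inequality applied conditional on $Y^n$ (where $X^n$ and $W_1^n$ remain independent, even though conditioning further on $U_n$ destroys this independence).

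Combining the upper bound on $h(X^n | Y^n)$ with the lower bound on $h(X^n | Y^n, U_n)$ produces an inequality of the form $I(X^n; U_n | Y^n) \leq \frac{n}{2}\ln(\text{numerator}/\text{denominator})$, where the numerator $P(N+1)-2Pe^{-C'}\sqrt{N(1-e^{-2r})}+N-Ne^{-2C'}(1-e^{-2r})$ and denominator $(P+1)Ne^{-2r}$ arise from collecting the second-moment contributions of the Wasserstein coupling and the Gaussian entropy-power estimates, respectively. The inner $\min_{r>0}$ follows because $r$ parametrizes the free auxiliary coupling, and the outer $\max_{C'\in[0,C_0]}$ captures the worst case over the unknown exact value of $I(Z^n;U_n|Y^n)/n$, which we only know is at most $C_0$.

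The main obstacle is the algebraic bookkeeping required to match the abstract Wasserstein/entropy bound from Theorem \ref{T:CNTI_Info} to the concrete two-channel structure of the Gaussian relay problem, and in particular to reconcile the two layers of conditioning: EPI must be applied under $Y^n$-conditioning (where $X^n \perp W_1^n$ holds), while the information constraint lives under $(Y^n,U_n)$-conditioning; propagating cleanly between the two is the delicate step that produces the precise cross-term $2Pe^{-C'}\sqrt{N(1-e^{-2r})}$ in the numerator.
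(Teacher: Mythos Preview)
Your plan diverges from the paper's argument in two structural ways, and the second of these is a genuine gap.

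\textbf{Different conditioning in the transportation step.} You propose to apply Theorem~\ref{T:CNTI_Info} with conditioning on $(Y^n,U_n)$ (or on $Y^n$ alone), coupling a normalized version of $Z^n$ to an auxiliary standard Gaussian $V^n$. The paper instead takes $T=(X^n,U_n)$ and couples $Z^n$ with $W_2^n=Y^n-X^n$, which is \emph{exactly} $\mathcal N(0,I_n)$ and exactly independent of $T$; this is what makes Theorem~\ref{T:CNTI_Info} apply cleanly and immediately yields Lemma~\ref{l:condlemma}. Correspondingly, the paper sets $C'=\tfrac{1}{n}I(Z^n;U_n|X^n)$, not $\tfrac{1}{n}I(Z^n;U_n|Y^n)$; the passage to $C'\in[0,C_0]$ comes from the Markov-chain inequality $I(Z^n;U_n|X^n)\leq I(Z^n;U_n|Y^n)$. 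Also, the paper never decomposes $I(X^n;U_n|Y^n)$ as $h(X^n|Y^n)-h(X^n|Y^n,U_n)$; it uses a chain-rule identity to reduce to $h(\bar Z^n|Y^n)-h(\bar Z^n|Y^n,U_n)-I(\bar Z^n;U_n|X^n)$ and bounds those three terms directly. No EPI is used anywhere.

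\textbf{The EPI step is the gap.} You want to pass from a lower bound on $h(Z^n|Y^n,U_n)$ to one on $h(X^n|Y^n,U_n)$ via the entropy power inequality for $Z^n=X^n+W_1^n$. As you correctly note, $X^n$ and $W_1^n$ are independent conditional on $Y^n$, but \emph{not} conditional on $(Y^n,U_n)$, so conditional EPI does not apply at the level where you need it. You flag this as ``the delicate step'' but do not supply a mechanism for bridging the two levels of conditioning, and I do not see one: the cross-term $2Pe^{-C'}\sqrt{N(1-e^{-2r})}$ in the numerator of \eqref{eq:longeq_mi} arises in the paper's proof from the inner-product bound $\dE[\bar Z^n\cdot Y^n]\geq n(P+e^{-C'}\sqrt{N(1-e^{-2r})})$ in Lemma~\ref{l:condlemma}, plugged into an LMMSE bound on $h(\bar Z^n|Y^n)$ --- not from any EPI-type manipulation. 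Without a concrete replacement for the failed EPI step, your route does not reach the stated bound.
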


Proposition \ref{P:SDPI} allows us to derive a new upper bound on the relay channel. In particular, if we use $U_n$ to denote the relay's transmission over the bit pipe, then it is easy to see that $Y^n-X^n-Z^n-U_n$ for the relay channel satisfies the conditions of the Markov chain described in \eqref{e:lmc}, and $$I(Z^n;U_n|Y^n)=H(U_n|Y^n)-H(U_n|Z^n,Y^n)\leq H(U_n)\leq nC_0.$$  Therefore,  by Fano's inequality and  Proposition \ref{P:SDPI} we can bound $C(C_0)$ by 
\begin{align} 
  nC(C_0) &  \leq I(X^n; Y^n, U_n) + n \epsilon \nonumber \\
& =  I(X^n; Y^n ) + I(X^n; U_n|Y^n )+ n \epsilon \nonumber \\
&\leq\max_{C'\in [0,C_0]}\min_{r>0} \frac{n}{2} \ln  \frac{P(N+1-2 e^{-C'}\sqrt{N(1-e^{-2r})}  )+N (1-e^{-2C'}(1-e^{-2r}))}{ Ne^{-2r}} + n\epsilon     \label{eq:longeq_capacitybound}
\end{align}
where we have used the simple fact that $I(X^n; Y^n )\leq\frac{n}{2}\ln(1+P)$. The upper bound in \eqref{eq:longeq_capacitybound} resolves Cover's problem as one can easily verify that it is strictly smaller than $n\,C(\infty)$ for any finite $C_0$.

\subsection{Proof of SDPI via Transportation Inequality}

To prove Proposition \ref{P:SDPI}, we need the following lemma, which is a consequence of the conditional transportation inequality stated in Theorem \ref{T:CNTI_Info}.  

\begin{lemma}\label{l:condlemma}
For the Markov chain \eqref{e:lmc}, if $I(Z^n; U_n|X^n)=nC'$ for some $C'\geq 0$, then for any $r>0$ there exists a random vector $\bar{Z}^n$ such that: 
\begin{enumerate}
\item  $P_{X^n,\bar{Z}^n,U_n}=P_{X^n,Z^n,U_n}$;
\item   $\dE[\bar{Z}^n \cdot Y^n] \ge n(P+\sqrt{N(1-e^{-2r})}e^{-C'})$; 
\item  $I(\bar Z^n;Y^n| X^n, U_n)\le nr.$ 
\end{enumerate}
\end{lemma}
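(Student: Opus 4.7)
The plan is to apply Theorem~\ref{T:CNTI_Info} conditionally on $T = (X^n, U_n)$ after rescaling the relay noise, and then translate the resulting coupling back to the original variables.

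First I would normalize. Let $\tilde Z^n = (Z^n - X^n)/\sqrt N$ and $\tilde Y^n = Y^n - X^n$. Because $W_2^n$ is independent of $(X^n, Z^n, U_n)$, we have $P_{\tilde Y^n | T = t} = \mathcal N(0, I_n)$ for every $t$, so the standard-Gaussian hypothesis of Theorem~\ref{T:CNTI_Info} is met. Since $\tilde Z^n = W_1^n/\sqrt N \sim \mathcal N(0, I_n)$ is independent of $X^n$, and the map $Z^n \leftrightarrow \tilde Z^n$ is a bijection given $X^n$,
\begin{equation*}
h(\tilde Z^n | X^n, U_n) = h(\tilde Z^n | X^n) - I(\tilde Z^n; U_n | X^n) = \tfrac{n}{2} \ln(2\pi e) - nC',
\end{equation*}
which yields $\tfrac{1}{2\pi e} e^{\frac{2}{n} h(\tilde Z^n | T)} = e^{-2C'}$.

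Next I would invoke Theorem~\ref{T:CNTI_Info} with this $T$ and with $R = nr$. This produces a coupling $P_{\hat Z^n, \hat Y^n | T}$ satisfying $P_{\hat Z^n | T} = P_{\tilde Z^n | T}$, $P_{\hat Y^n | T} = \mathcal N(0, I_n)$, $I(\hat Z^n; \hat Y^n | T) \leq nr$, and
\begin{equation*}
\dE[\|\hat Z^n - \hat Y^n\|^2] \leq 2n - 2n \sqrt{e^{-2C'}(1 - e^{-2r})}.
\end{equation*}
Since $\dE[\|\hat Z^n\|^2] = \dE[\|\hat Y^n\|^2] = n$, this is equivalent to the inner-product bound $\dE[\hat Z^n \cdot \hat Y^n] \geq n e^{-C'} \sqrt{1 - e^{-2r}}$. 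I would then splice this coupling into the original system by setting $\bar Z^n = \sqrt N\, \hat Z^n + X^n$ and identifying $Y^n$ with $X^n + \hat Y^n$; the latter is distributionally consistent with the problem's $Y^n = X^n + W_2^n$ because $\hat Y^n$, like $W_2^n$, is conditionally $\mathcal N(0, I_n)$ given $T$ (and in particular independent of $U_n$ given $X^n$).

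The three required properties then follow. Property (1) is immediate because $\bar Z^n = \sqrt N\, \hat Z^n + X^n$ applies to $\hat Z^n$ the same affine map that takes $\tilde Z^n$ to $Z^n$, combined with $P_{\hat Z^n, X^n, U_n} = P_{\tilde Z^n, X^n, U_n}$ by construction. Property (3) follows because given $T$, $\bar Z^n$ and $Y^n$ are bijections of $\hat Z^n$ and $\hat Y^n$, so their conditional mutual informations agree. For property (2), expand
\begin{equation*}
\dE[\bar Z^n \cdot Y^n] = \dE[\|X^n\|^2] + \sqrt N\, \dE[\hat Z^n \cdot \hat Y^n] + \sqrt N\, \dE[\hat Z^n \cdot X^n] + \dE[X^n \cdot \hat Y^n];
\end{equation*}
the last two cross terms vanish, since marginally $\hat Z^n \sim \mathcal N(0, I_n)$ is independent of $X^n$ (by property (1) together with the original $Z^n$--$X^n$ independence structure) and $\hat Y^n | X^n \sim \mathcal N(0, I_n)$ independently of $X^n$. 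Combining this with $\dE[\|X^n\|^2] = nP$ yields the desired $\dE[\bar Z^n \cdot Y^n] \geq n\bigl(P + \sqrt{N(1-e^{-2r})}\, e^{-C'}\bigr)$. The only delicate point is verifying Theorem~\ref{T:CNTI_Info}'s absolute-continuity hypothesis $P_{\tilde Z^n | T = t} \ll \mathcal N(0, I_n)$, which holds because $U_n$ is a (possibly randomized) function of $Z^n$ alone, so conditioning on $U_n$ merely reweights the Gaussian conditional density of $\tilde Z^n | X^n$.
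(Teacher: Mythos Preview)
Your proof is correct and follows essentially the same approach as the paper: apply Theorem~\ref{T:CNTI_Info} with $T=(X^n,U_n)$ and the standard-Gaussian role played by $Y^n-X^n$, then translate back. The only difference is that the paper applies the theorem directly to $Z^n$ (using $h(Z^n|X^n,U_n)=\tfrac{n}{2}\ln(2\pi e N)-nC'$ and $\dE[\bar Z^n\cdot X^n]=\dE[Z^n\cdot X^n]=nP$) rather than first normalizing to $\tilde Z^n=(Z^n-X^n)/\sqrt N$; your extra rescaling is harmless but unnecessary, and your remark on absolute continuity is more careful than the paper's own treatment.
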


\begin{proof}
Lemma \ref{l:condlemma} follows immediately from Theorem \ref{T:CNTI_Info} by setting $T=(X^n,U_n)$. In particular, 
noting that the random vector $W^n_2=Y^n-X^n\sim \mathcal N(0,I_n)$ and is independent of $(X^n,U_n)$, we have by Theorem \ref{T:CNTI_Info} that
\begin{align}
 W_2^2(P_{Z^n|X^n,U_n},P_{Y^n-X^n|X^n,U_n}|P_{X^n,U_n};nr)
 &\leq \dE[\|Z^n\|^2]+n-2n\sqrt{1-e^{-2r}}\sqrt{\frac{1}{2\pi e}e^{\frac{2}{n}h(Z^n|X^n,U_n)}}.   \nonumber
\end{align}
Therefore, there exists a random vector $\bar{Z}^n$ such that
\begin{align}
 (\bar{Z}^n ,X^n,U_n)& \sim P_{Z^n,X^n,U_n},\nonumber\\
I(\bar{Z}^n;Y^n|X^n,U_n)& \leq nr, \nonumber\\
 \dE[\bar Z^n\cdot (Y^n-X^n)] & \ge n\sqrt{1-e^{-2r}}\sqrt{\frac{1}{2\pi e}e^{\frac{2}{n}h(Z^n|X^n,U_n)}}. \label{Pf:NTI_Info_Cond_Lemma_Cond3}
\end{align} 
This proves 1) and 3) of Lemma \ref{l:condlemma}. To show 2) of Lemma \ref{l:condlemma}, note that
\begin{align}
\dE[\bar Z^n\cdot Y^n]&=\dE[\bar{Z}^n\cdot (Y^n-X^n)]+\dE[\bar Z^n\cdot X^n] \nonumber\\
& \ge  n\sqrt{1-e^{2r}}\sqrt{\frac{1}{2\pi e}e^{\frac{2}{n}h(Z^n|X^n,U_n)}}+\dE[Z^n\cdot X^n] \label{Pf:NTI_Info_Cond_Lemma_ex1}\\
&=n\sqrt{1-e^{2r}}\sqrt{Ne^{-2C'}}+nP\label{Pf:NTI_Info_Cond_Lemma_ex2}\\
&=n(P+\sqrt{N(1-e^{-2r})}e^{-C'}) \nonumber
\end{align}
where \eqref{Pf:NTI_Info_Cond_Lemma_ex1} follows from \eqref{Pf:NTI_Info_Cond_Lemma_Cond3} and the fact that $(\bar Z^n,X^n)\sim P_{Z^n,X^n}$, and \eqref{Pf:NTI_Info_Cond_Lemma_ex2} holds because $$h(Z^n|X^n,U_n)=h(Z^n|X^n)-I(Z^n;X^n|U_n)=\frac{n}{2}\ln 2\pi e Ne^{-2C'}.$$ This completes the proof of Lemma \ref{l:condlemma}. 
\end{proof}

We now use Lemma \ref{l:condlemma} to prove Proposition \ref{P:SDPI}.  Assuming that for the Markov chain \eqref{e:lmc}, 
$$I(Z^n; U_n|X^n)=nC'$$ for some $C'\geq 0$, we can create an auxiliary random vector $\bar{Z}^n$ coupled with $X^n, U_n, Y^n$ so as to satisfy the properties in Lemma \ref{l:condlemma}. Therefore, we have
\begin{align}
 &I(X^n;U_n|Y^n)\nonumber\\
 &= I( \bar Z^n;U_n|Y^n)+ I(X^n;U_n|Y^n, \bar Z^n) - I( \bar Z^n;U_n| Y^n,X^n) \nonumber\\
& = I( \bar Z^n;U_n|Y^n)+ h( U_n|Y^n, \bar Z^n) - h(  U_n| Y^n,X^n)  \nonumber\\
&\leq  I( \bar Z^n;U_n|Y^n)+ h( U_n| \bar Z^n) - h(  U_n|  X^n)\nonumber\\
&=  I( \bar Z^n;U_n|Y^n)- I( \bar Z^n; U_n| X^n) \label{e:markovaux}\\
& = h(\bar Z^n|Y^n)-h(\bar Z^n|Y^n, U_n) - I(\bar  Z^n; U_n| X^n)  \label{e:tobound}
\end{align}
where \eqref{e:markovaux} follows because $P_{X^n,\bar{Z}^n,U_n}=P_{X^n,Z^n,U_n}$ by 1) of Lemma \ref{l:condlemma} and thus $X^n-\bar Z^n-U_n$ forms a Markov chain.   In the following, we will bound  
the first two terms in \eqref{e:tobound} respectively. (Note that this bounding process precisely mirrors the packing argument used in the geometric proof of \cite{WuBarnesOzgur_TIT} and \cite{wu2017geometry}, and provides an interpretation of the packing argument in terms of auxiliary random variables.)


To bound the first term in \eqref{e:tobound}, we have for any $r>0$,
  \begin{align}
 &h(\bar Z^n|Y^n)  =   h\left(\bar Z^n-\frac{\dE[\bar Z^n\cdot Y^n]}{\dE[\|Y^n\|^2]}Y^n\Big|Y^n\right) \nonumber \\
& \leq  h\left(\bar Z^n-\frac{\dE[\bar Z^n\cdot Y^n]}{\dE[\|Y^n\|^2]}Y^n \right) \nonumber \\
& \leq  \frac{n}{2}\ln \frac{2\pi e}{n} \dE\left[\left\|\bar Z^n-\frac{\dE[\bar Z^n\cdot Y^n ]}{\dE[\|Y^n\|^2]}Y^n \right\| ^2\right]   \nonumber \\
& = \frac{n}{2}\ln  \frac{2\pi e}{n}\left(\dE[\|\bar Z^n\|^2]-\frac{\dE[\bar Z^n\cdot Y^n]^2}{\dE[\|Y^n\|^2]}\right)  \nonumber\\
& \leq \frac{n}{2}\ln 2\pi e \frac{P(N+1-2 e^{-C'}\sqrt{N(1-e^{-2r})}  )+N (1-e^{-2C'}(1-e^{-2r}))}{P+1}\label{e:comb2} 
 \end{align}
 where in the last step we have used 2) of Lemma \ref{l:condlemma}.  To bound the second term in \eqref{e:tobound}, we have for any $r>0$,
 \begin{align}
& h(\bar Z^n|Y^n, U_n)  \geq  h(\bar Z^n|Y^n, U_n,X^n) \nonumber \\
 & = h(\bar Z^n| U_n,X^n) -I(\bar Z^n; Y^n|U_n,X^n) \nonumber \\
  & = h( \bar Z^n|  X^n) - I( \bar Z^n; U_n| X^n) -I(\bar Z^n; Y^n|U_n,X^n) \nonumber \\
  &\geq \frac{n}{2}\ln 2\pi e N - nC' -nr \nonumber \\
  &= \frac{n}{2}\ln 2\pi   N e^{1-2(C'+r) }  \label{e:comb3}
 \end{align}
where the second inequality follows from 3) of Lemma \ref{l:condlemma}.

Plugging \dref{e:comb2}--\dref{e:comb3}  into \eqref{e:tobound} gives a bound on $I(X^n;U_n|Y^n)$ in terms of the value of $I(Z^n; U_n|X^n)=nC'$ that holds for any $r>0$. Therefore, the bound can be tightened by minimizing over $r>0$. The value $C'$ is unknown, but due to the Markov chain \eqref{e:lmc} we have 
$$I(  Z^n; U_n| X^n)\leq I(  Z^n; U_n| Y^n)\leq nC_0.$$
The bound in Proposition  \ref{P:SDPI} follows by taking a maximum over $C'\in [0, C_0]$.

\appendices

\section{Comparison of \eqref{E:TI} and \eqref{E:NewTI}}\label{A:Comp}
Given $P_{Y^n} = \mathcal N(0, I_n)$ and $P_{Z^n}\ll P_{Y^n}$,  let $f_{Y^n}$ and $f_{Z^n}$ denote their respective densities. Then we have
\begin{align*}
& \text{R.H.S. of  \eqref{E:TI}} =2\dE\left [\ln \frac{f_{Z^n}(Z^n)}{f_{Y^n}(Z^n)} \right] \\
& =2 \dE[\ln f_{Z^n}(Z^n) ]-2 \dE\left[\ln \frac{1}{(2\pi )^{n/2}} \exp\left(-\frac{\|Z^n\|^2}{2}\right)\right]  \\
& = -2h(Z^n)+n\ln 2\pi +\dE[\|Z^n\|^2] \\
& =\dE[\|Z^n\|^2] +n  - 2 n\left[ \frac{1}{2}\left(\frac{2}{n}h(Z^n)- \ln 2\pi e \right) +1\right]  \\
& =\dE[\|Z^n\|^2] +n  - 2 n\left[ \ln \sqrt{\frac{1}{2\pi e}  e^{\frac{2}{n}h(Z^n)}  }+1\right]  \\
& \geq \dE[\|Z^n\|^2] +n  - 2 n   \sqrt{\frac{1}{2\pi e}  e^{\frac{2}{n}h(Z^n)}  }   \\
&= \text{R.H.S. of  \eqref{E:NewTI}}
\end{align*}
where the inequality follows from $\ln a+1 \leq a$ and  holds with equality iff $  \sqrt{\frac{1}{2\pi e}  e^{\frac{2}{n}h(Z^n)}  }  =1$, i.e. $h(Z^n)=\frac{n}{2}\ln 2\pi e$. 

\section{Proofs of Proposition \ref{P:IDCTRI}}\label{A:TRI}
 
Let $(X_1,X_2,X_3)\sim P$ be a coupling of  $(\mu_1,\mu_2, \mu_3)$ such that 
\begin{enumerate}
\item $X_1=g(X_2)$ where  $g$ is a one-to-one mapping and  $(g(X_2), X_2)$ is an optimal coupling of $(\mu_1,\mu_2)$ that attains $W_p (\mu_1,\mu_2)$, i.e., 
$$W_p (\mu_1,\mu_2)=\left\{ \dE_{P}[d^p(X_1,X_2)]\right\}^{1/p};$$
\item $(X_2,X_3)$ is an optimal coupling of $(\mu_2,\mu_3)$ under  the $( R,\tau,\delta )$-information density constraint that attains $W_p(\mu_2,\mu_3; R,\tau,\delta )$, i.e., 
$$W_p(\mu_2,\mu_3; R,\tau,\delta )=\left\{ \dE_{ P}[d^p(X_2, X_3)]\right\}^{1/p}.$$
\end{enumerate}
From the above two conditions, it follows that $(X_1,X_3)$ is a coupling of $(\mu_1,\mu_3)$ that also satisfies the $( R,\tau,\delta )$-information density constraint. Indeed, since $X_1$ and $X_2$ are one-to-one mappings of each other, we have
\begin{align}
I_P(X_1;X_3)=I_P(X_2;X_3)\leq R \nonumber 
\end{align}
and 
\begin{align}
 \mathbb P  (|i_{P}(X_1;X_3)- I_P(X_1;X_3)| \leq \tau) 
= \mathbb  P (|i_{P}(X_2;X_3)- I_P (X_2;X_3)| \leq \tau) >1-\delta .\nonumber 
\end{align}
Therefore, we have 
\begin{align}
 W_p(\mu_1,\mu_3; R,\tau,\delta )& \leq \dE_P [d(X_1,X_3)^p]^{1/p} \nonumber \\
& \leq  \dE_P[(d(X_1,X_2)+d(X_2,X_3))^p]^{1/p}\nonumber\\
& \leq  \dE_P[d(X_1,X_2)^p]^{1/p}+\dE_P[d(X_2,X_3)^p]^{1/p} \label{Pf:Wass_InfoD_Tr_Ex1}\\
& = W_p(\mu_1,\mu_2)+W_p(\mu_2,\mu_3; R,\tau,\delta )\nonumber 
\end{align}
where \eqref{Pf:Wass_InfoD_Tr_Ex1} follows from the Minkowski inequality. This completes the proof of Proposition \ref{P:IDCTRI}.

\section{Continuity of $\alpha(\eta)$}

\begin{lemma}\label{l:W2alphaisCont}
	The function $\alpha(\eta)$ defined in \eqref{D:W_InfD_alpha} is continuous in $\eta$. 
\end{lemma}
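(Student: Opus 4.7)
The plan is to reduce continuity of $\alpha(\eta)$ to continuity of
$$\beta(\eta):=\sup_{y^n\in B}\mu(\text{Cap}(y^n,\eta)\cap A),$$
since $\alpha(\eta)=\frac{1}{n}\bigl(\ln\mu(A)-R-\ln\beta(\eta)\bigr)$ and the logarithm is continuous on $(0,\infty)$. Thus everything comes down to showing that $\beta$ is continuous on the range of $\eta$ where it is strictly positive.

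The key observation I would exploit is the rotational invariance of the uniform measure $\mu$ on $\mathbb S^{n-1}$: the cap-measure $F(\eta):=\mu(\text{Cap}(y^n,\eta))$ is a continuous function of $\eta$ alone, independent of the pole $y^n$. For any $y^n\in\mathbb S^{n-1}$ and $\eta_1\leq \eta_2$,
$$0\leq \mu(\text{Cap}(y^n,\eta_2)\cap A)-\mu(\text{Cap}(y^n,\eta_1)\cap A)\leq \mu\bigl(\text{Cap}(y^n,\eta_2)\setminus \text{Cap}(y^n,\eta_1)\bigr)=F(\eta_2)-F(\eta_1),$$
and crucially this bound is \emph{uniform} across $y^n$. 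Taking a supremum over $y^n\in B$ preserves the estimate, giving $|\beta(\eta_1)-\beta(\eta_2)|\leq |F(\eta_1)-F(\eta_2)|$, so $\beta$ inherits the continuity of $F$ on $[0,\pi]$.

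To transfer continuity back to $\alpha$, I would restrict to the range where $\beta(\eta)>0$. Since $\beta$ is monotone nondecreasing with $\beta(\pi)=\mu(A)>0$, there is some threshold $\eta_0\in[0,\pi)$ such that $\beta(\eta)>0$ for $\eta>\eta_0$; on $(\eta_0,\pi]$ the continuity of $\ln$ transports the continuity of $\beta$ to $\alpha$. For $\eta\leq\eta_0$, $\beta(\eta)=0$ and so $\alpha(\eta)=+\infty$ identically, which is trivially continuous (in the extended sense) and does not affect the set $\{\eta:\alpha(\eta)\geq\epsilon\}$ used in the definition \eqref{D:W_InfD_eta*} of $\eta_\epsilon^*$, since $\epsilon$ is finite. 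In particular, the continuity at the specific value $\eta_\epsilon^*$ that is invoked in \eqref{Pf:StrIsoSph_tobound_prob2_ex2_new} to conclude $\alpha(\eta_\epsilon^*)\geq\epsilon$ is covered by this argument. The main (and quite minor) obstacle is therefore only the logarithmic singularity at $\beta=0$; the core measure-theoretic continuity of $\beta$ itself is a clean consequence of the rotational-invariance bound by an annulus measure.
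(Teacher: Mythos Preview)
Your argument is correct and follows essentially the same route as the paper's proof: both reduce to continuity of $\beta(\eta)=\sup_{y^n\in B}\mu(\text{Cap}(y^n,\eta)\cap A)$ and establish it by uniformly bounding the increment by the annulus measure $\mu(\text{Cap}(y^n,\eta_2)\setminus\text{Cap}(y^n,\eta_1))=F(\eta_2)-F(\eta_1)$, which is independent of the pole by rotational invariance. Your treatment is in fact slightly more careful than the paper's, since you explicitly handle the range where $\beta(\eta)=0$ and the logarithm blows up, whereas the paper's proof is silent on this point; this does not change the strategy but closes a small gap relevant to the actual use of the lemma at $\eta_\epsilon^*$.
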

\begin{proof}
Rewrite $\alpha(\eta)$ as 
\begin{align*}
\alpha(\eta) &=\ \frac{1}{n}\left(-\sup_{x^n\in B}\{\ln \mu(\text{Cap}(x^n,\eta)\cap A)\}+\ln (\mu (A))-R\right)\\
  &=\ \frac{1}{n}\left(-\ln\left( \sup_{x^n\in B}\{ \mu(\text{Cap}(x^n,\eta)\cap A)\}\right)+\ln (\mu (A))-R\right).
\end{align*}
To prove $\alpha(\eta)$ is continuous  in $\eta$, it suffices to show  
\begin{align}
\sup_{x^n\in B}\mu(\text{Cap}(x^n,\eta)\cap A) \label{e:rightconti}
\end{align}
 is continuous in $\eta$.  

For any $\epsilon>0$, we have
 \begin{align}
&\left|\sup_{x^n\in B}\mu(\text{Cap}(x^n,\eta+\epsilon)\cap A)-\sup_{x^n\in B}\mu(\text{Cap}(x^n,\eta)\cap A)\right| \nonumber \\
= \ & \sup_{x^n\in B}\mu(\text{Cap}(x^n,\eta+\epsilon)\cap A)-\sup_{x^n\in B}\mu(\text{Cap}(x^n,\eta)\cap A) \nonumber \\
= \ & \sup_{x^n\in B} \left\{\mu(\text{Cap}(x^n,\eta+\epsilon)\cap A)-\sup_{x^n\in B}\mu(\text{Cap}(x^n,\eta)\cap A) \right\}\nonumber \\
\leq \ &    \sup_{x^n\in B} \bigg\{\mu(\text{Cap}(x^n,\eta+\epsilon)\cap A)- \mu(\text{Cap}(x^n,\eta)\cap A) \bigg\}\nonumber \\
= \ &  \sup_{x^n\in B}  \mu((\text{Cap}(x^n,\eta+\epsilon) \setminus  \text{Cap}(x^n,\eta))\cap A)  \nonumber \\
\leq \ &  \sup_{x^n\in B} \mu( \text{Cap}(x^n,\eta+\epsilon) \setminus  \text{Cap}(x^n,\eta) )  \nonumber \\
\leq \ & \delta (\epsilon)
 \end{align}
for some $\delta (\epsilon)\to 0$ as $\epsilon \to 0$, and therefore we have shown the right-continuity of \dref{e:rightconti}. Similarly, we can show the left-continuity of \dref{e:rightconti}. This proves the lemma.
\end{proof}

\section*{Acknowledgement}
The authors are grateful to Mokshay Madiman and Igal Sason for their helpful discussions and comments.


\end{document}